\DeclareRobustCommand{\rgamma}{{\mathpalette\irgamma\relax}}
\newcommand{\irgamma}[2]{\raisebox{\depth}{$#1\gamma$}}
\newcolumntype{L}[1]{>{\raggedright\let\newline\\\arraybackslash\hspace{0pt}}m{#1}}
\newcolumntype{C}[1]{>{\centering\let\newline\\\arraybackslash\hspace{0pt}}m{#1}}
\newcolumntype{R}[1]{>{\raggedleft\let\newline\\\arraybackslash\hspace{0pt}}m{#1}}
 \DeclarePairedDelimiter\ceil{\lceil}{\rceil}
\DeclarePairedDelimiter\floor{\lfloor}{\rfloor}
 \newtheorem{theorem}{Theorem}
 \newtheorem{lemma}{Lemma}
 \newtheorem{claim}{Claim}
 \newtheorem{corollary}{Corollary}
\newcommand{\bcp}{{\sc BCP}\xspace}
\newcommand{\bocp}{{\sc BoCP}\xspace}
\newcommand{\ecp}{{\sc ECP}\xspace}
\newcommand{\dsp}{{\sc Dominating Set Problem}\xspace}
\newcommand{\partp}{{\sc $3$-Partition Problem}\xspace}
\newcommand{\ebcp}{{\sc EBCP}\xspace}
\newcommand{\I}{\mathcal{I}}
\newcommand{\B}{\mathcal{B}}
\newcommand{\K}{\mathcal{K}}
\newcommand{\Ge}{\mathcal{G}}
\newcommand{\Pe}{\mathcal{P}}
\newcommand{\fpt}{{\sf FPT}\xspace}
\newcommand{\npc}{\textsf{NP}-complete\xspace}
\newcommand{\nph}{\textsf{NP}-hard\xspace}
\newcommand{\woh}{\textsf{W[1]}-hard\xspace}
\newcommand{\wth}{\textsf{W[2]}-hard\xspace}
\newcommand{\wih}{\textsf{W[i]}-hard\xspace}
\newcommand{\defproblem}[3]{
	\begin{tcolorbox}[colback=gray!5!white,colframe=gray!75!black]
		\begin{tabular*}{\textwidth}{@{\extracolsep{\fill}}lr} #1   \\ \end{tabular*}
		{\bf{Input:}} #2  \\
		{\bf{Question:}} #3
	\end{tcolorbox}
}
\begin{document}
\title{Structural Parameterizations of Budgeted Graph Coloring}
%
%
%
%
\author[1]{Susobhan Bandopadhyay}
\author[2]{Suman Banerjee}
\author[1]{Aritra Banik}
\author[3]{Venkatesh Raman}
\affil[1]{
National Institute of Science Education and Research, HBNI, Bhubaneswar, India.

E-mail: \{susobhan.bandopadhyay,aritra\}@niser.ac.in} 
\affil[2]{Department of Computer Science and Engineering, Indian Institute of Technology Jammu, India. 

E-mail: suman.banerjee@iitjammu.ac.in}
\affil[3]{The Institute of Mathematical Sciences, HBNI, Chennai, India.

E-mail:vraman@imsc.res.in}

\maketitle              
\begin{abstract}
We introduce a variant of the graph coloring problem, which we denote as {\sc Budgeted Coloring Problem} (\bcp). Given a graph $G$, an integer $c$ and an ordered list of integers $\{b_1, b_2, \ldots, b_c\}$, \bcp asks whether there exists a proper coloring of $G$ where the $i$-th color is used to color at most $b_i$ many vertices. This problem generalizes two well-studied graph coloring problems, {\sc Bounded Coloring Problem} (\bocp) and {\sc Equitable Coloring Problem} (\ecp) and as in the case of other coloring problems, it is \nph even for constant values of $c$.
So we study \bcp  under the paradigm of parameterized complexity, particularly with respect to (structural) parameters that specify how far (the deletion distance) the input graph is from a tractable graph class.
\begin{itemize}
	\item
	We show that \bcp is \fpt (fixed-parameter tractable) parameterized by the vertex cover size. This generalizes a similar result for \ecp and immediately extends to the \bocp, which was earlier not known. 
	\item
	We show that \bcp is polynomial time solvable for cluster graphs generalizing a similar result for \ecp. However, we show that \bcp is \fpt, but unlikely to have polynomial kernel, when parameterized by the deletion distance to clique, contrasting the linear kernel for \ecp for the same parameter.
	\item
	While the \bocp is known to be polynomial time solvable on split graphs, we show that \bcp is \nph on split graphs. As \bocp is hard on bipartite graphs when $c>3$, the result follows for \bcp as well. We provide a dichotomy result by showing that \bcp is polynomial time solvable on bipartite graphs when $c=2$. We also show that \bcp is \nph on co-cluster graphs, contrasting the polynomial time algorithm for \ecp and \bocp.  
\end{itemize}
Finally we present an $\mathcal{O}^*(2^{|V(G)|})$ algorithm for the \bcp, generalizing the known algorithm with a similar bound for the standard chromatic number.

\end{abstract}
%
%
%
\section{Introduction}\label{sec:intro}
A proper vertex coloring of a graph $G$ is an assignment $\varphi:V(G)\xrightarrow[]{}[c]$, of colors to its vertices such that, for any edge $(u,v)\in E(G), ~\varphi(u)\neq \varphi(v)$, here $[c]=\{1,2, 3, \cdots, c\}$. Minimum number of colors used by any proper coloring of $G$ is the {\it chromatic number} of $G$ and denoted by $\chi(G)$. 
For any proper coloring $\varphi$, we denote the set of vertices which gets color $i$ by $V_i$, formally $V_i=\{v\in V(G)|\varphi(v)=i\}$.

We introduce a generalization of a well-studied variant called {\sc Bounded Coloring Problem} (\bocp) which asks for a proper coloring of $G$ with each $|V_i|$ bounded by a given constant $d$. One motivation comes from
the problem {\sc Bin Packing Problem with Conflicts} (BPPC)~\cite{binpacking}. Here we are given a set of $n$ unit sized objects to be packed into at most $c$ bins of size at most $d$ each, except that some pairs of objects can not be placed in the same bin. This conflict information can be captured by a conflict graph and the problem is exactly an instance of \bocp. 
We consider a natural generalization where the bins have different, but given, sizes. The associated graph coloring problem which we call {\sc Budgeted Coloring Problem} (\bcp) is as follows.

\defproblem{{\sc Budgeted Coloring Problem} (\bcp)}{An undirected graph $G(V,E)$, an integer $c$ and an ordered list $\B=\{b_{1}, b_{2}, \ldots, b_{c}\}$}{Does there exist a proper coloring $\varphi: V(G)\rightarrow [c]$ of $G$ such that $|V_i|\leq b_i$ for all $1\leq i\leq c$?}   

Given $G$ and $\B$ we denote any proper coloring $\varphi$ of $G$ a { \sc  proper  budgeted coloring } if  $|V_i|\leq b_i$ for all $1\leq i\leq c$.
\bcp also generalizes another well-studied variant, the {\sc Equitable Coloring Problem} (\ecp). 
Given a graph $G$ and an integer $c$, \ecp asks whether $G$ can be colored with $c$ colors such that for each $i$, $|V_i|=\floor*{\frac{n}{c}}$ or $|V_i|=\ceil*{\frac{n}{c}}$.


As coloring is hard even when $c$ is $3$ for general graphs \cite{GareyJohnson}, we can rule out any fixed parameter tractable (\fpt) algorithm parameterized by the number of colors for \bcp for general graphs. In this paper, we study the complexity of \bcp for restricted graph classes and present \fpt algorithms parameterized by some structural parameters. 

Throughout the paper, we denote the number of colors by $c$. We follow the symbols and notations of graph theory and parameterized complexity theory as in the textbooks \cite{diestel2012graph} and \cite{ParamAlgorithms15b}, respectively. 


It follows from the definition of \bocp that for the graph classes for which \bcp has a polynomial time or fixed-parameter tractable algorithms, \bocp also has a similar algorithm. On the other hand, if \bocp is \nph (\wih for a parameter) on a graph class then \bcp is \nph (\wih for the same parameter) too.
The following extension is not obvious though not difficult.

\begin{lemma}$^\star$ \label{lemma:allsame}
	If there exists an algorithm to solve \bcp  for a graph class $\Ge$ with respect to some parameter $k$ in time $f(k,|V(G)|)$, then we can solve \ecp in time $f(k,|V(G)|)$ for any graph $G\in\Ge$.
	
\end{lemma}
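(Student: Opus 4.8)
The plan is to reduce, in constant time, any instance of \ecp on a graph $G \in \Ge$ to a \emph{single} instance of \bcp on the \emph{same} graph $G$ with a carefully chosen budget list, and then invoke the assumed algorithm for \bcp on $\Ge$. Since the underlying graph is unchanged, any structural parameter $k$ (deletion distance to some class, vertex cover number, etc.) and the quantity $|V(G)|$ are unchanged, and the number of colors $c$ is also reused verbatim; hence the claimed running time $f(k,|V(G)|)$ carries over directly.

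Concretely, given the \ecp instance $(G,c)$, set $n=|V(G)|$, $q=\floor*{n/c}$ and $r=n-qc$, so that $0\le r<c$ and $\ceil*{n/c}=q+1$ whenever $r\ge 1$. I take the ordered budget list
\[
\B=(\,\underbrace{q+1,\ldots,q+1}_{r\text{ times}},\ \underbrace{q,\ldots,q}_{c-r\text{ times}}\,),
\]
run the \bcp algorithm on $(G,c,\B)$, and output its answer. The key observation — and the only point that makes the lemma ``not obvious'' — is that $\sum_{i=1}^{c} b_i = r(q+1)+(c-r)q = cq+r = n$. Thus in any proper budgeted coloring $\varphi$ of $(G,c,\B)$ we have $n=\sum_i |V_i|\le\sum_i b_i=n$, which forces $|V_i|=b_i\in\{q,q+1\}$ for every $i$; in particular $\varphi$ is an equitable $c$-coloring of $G$. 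Conversely, if $G$ has an equitable $c$-coloring $\psi$, each $|V_i|\in\{q,q+1\}$, and if $a$ of the classes have size $q+1$ then $n=a(q+1)+(c-a)q=cq+a$, forcing $a=r$; relabelling colors so that the $r$ classes of size $q+1$ get labels $1,\ldots,r$ yields $|V_i|\le b_i$ for all $i$, i.e.\ a proper budgeted coloring for $(G,c,\B)$. Hence the \bcp instance is a yes-instance if and only if the \ecp instance is.

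The conceptual heart, then, is that \bcp only imposes \emph{upper} bounds on color-class sizes while \ecp additionally demands a \emph{lower} bound of $\floor*{n/c}$ on each class; choosing the budgets so that they sum to exactly $n$ turns the global vertex count into a per-class equality and thereby simulates the missing lower bounds at no extra cost. The remaining steps are entirely routine: the arithmetic identity $\sum b_i=n$, and the relabelling argument in the converse direction. I expect the main (minor) obstacle to be purely expository — namely, being careful that all quantities on which the parameter and the running time depend are genuinely preserved by the reduction, and handling the degenerate regime $n<c$ (where $q=0$), in which both problems are trivially yes-instances.
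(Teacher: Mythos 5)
Your reduction is correct and is exactly the intended argument for this lemma: run the \bcp algorithm on the same graph with $r=n\bmod c$ budgets equal to $\ceil*{n/c}$ and $c-r$ equal to $\floor*{n/c}$, so that the budgets sum to $n$ and every budgeted coloring is forced to meet each budget with equality, which is precisely equitability (and conversely, after relabelling colors). Since the graph, the parameter $k$, and $|V(G)|$ are unchanged, the running time $f(k,|V(G)|)$ carries over as claimed.
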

The following two corollaries are immediate from Lemma \ref{lemma:allsame}.

\begin{corollary}
	If for some graph class $\Ge$ (and a parameter $k$), \bcp is polynomial time (\fpt for the parameter $k$) then \bocp and \ecp are also polynomial time (\fpt for the same parameter respectively) for $\Ge$. 
\end{corollary}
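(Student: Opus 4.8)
The plan is to reduce an instance of \ecp on a graph $G \in \Ge$ to a \emph{single} instance of \bcp on the \emph{same} graph $G$ with the \emph{same} number of colors $c$; only the budget list $\B$ is engineered. Because the graph --- and hence the structural parameter $k$ --- is unchanged, one call to the hypothesized \bcp algorithm then decides \ecp within the claimed running time $f(k,|V(G)|)$.

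The observation that makes this work is that a proper coloring $\varphi \colon V(G) \to [c]$ is a \emph{total} map, so $\sum_{i=1}^{c} |V_i| = n$ with $n = |V(G)|$ for every proper $c$-coloring. Hence if $\B = \{b_1, \dots, b_c\}$ is chosen with $\sum_{i=1}^{c} b_i = n$, then in \emph{any} proper budgeted coloring the inequalities $|V_i| \le b_i$ are forced to be equalities, $|V_i| = b_i$ for all $i$. So a budget list summing to exactly $n$ simulates a system of equalities, and no separate lower-bound gadget (which \bcp does not directly provide) is needed.

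Concretely, I would write $n = qc + r$ with $0 \le r < c$, so $\floor*{n/c} = q$ and, when $r > 0$, $\ceil*{n/c} = q+1$; then set $b_i = q+1$ for $1 \le i \le r$ and $b_i = q$ for $r < i \le c$, which gives $\sum_i b_i = r(q+1) + (c-r)q = n$. The correctness claim to verify is that $G$ admits an equitable $c$-coloring if and only if $(G, c, \B)$ is a yes-instance of \bcp. For one direction, a proper budgeted coloring has $|V_i| = b_i \in \{q, q+1\} = \{\floor*{n/c}, \ceil*{n/c}\}$ by the observation above, hence is equitable. For the converse, in any equitable $c$-coloring the number $s$ of color classes of size $q+1$ satisfies $s(q+1) + (c-s)q = n$, so $s = r$; after permuting the color labels so that these $r$ classes use colors $1, \dots, r$, all constraints $|V_i| \le b_i$ hold, which is a proper budgeted coloring.

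The one wrinkle is the degenerate case $n < c$, where $q = 0$ and some budgets are $0$: there \ecp is trivially a yes-instance (color the vertices with pairwise distinct colors), so it is decided without invoking \bcp at all. I do not expect a serious obstacle here; the only real step is spotting that totality of the coloring turns a budget vector summing to $n$ into a system of equalities, after which the reduction and its verification are routine.
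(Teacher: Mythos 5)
Your reduction for \ecp is correct and is essentially the paper's own route (Lemma~\ref{lemma:allsame}): run \bcp on the \emph{same} graph with the same $c$ and budgets $r$ copies of $\ceil*{n/c}$ and $c-r$ copies of $\floor*{n/c}$, so that $\sum_i b_i = n$ and totality of the coloring forces $|V_i| = b_i$, which together with the label-permutation argument in the converse direction gives the equivalence while leaving the structural parameter untouched. The only thing missing is that the corollary also asserts the claim for \bocp, which your proposal never mentions; that half is immediate (and needs no lemma), since a \bocp instance with bound $d$ is exactly the \bcp instance on the same graph with $b_i = d$ for all $i$, so one sentence to that effect would complete the proof.
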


\begin{corollary}\label{cor:hard}
	If for some graph class $\Ge$ (and a parameter $k$), \bocp or \ecp is \nph (\wih for the parameter $k$), then  \bcp is \nph (\wih for the same parameter respectively) for $\Ge$. 
	
\end{corollary}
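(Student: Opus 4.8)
The plan is to exhibit a parameter-preserving, polynomial-time reduction from \ecp to \bcp that leaves the input graph $G$ completely untouched (so that any structural parameter $k$ measured on $G$ is unchanged), and then simply invoke the assumed \bcp algorithm.

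First I would construct the budgets. Given an \ecp instance $(G,c)$ with $n=|V(G)|$, write $n=qc+r$ with $0\le r<c$, so that $q=\floor*{n/c}$ and, when $r>0$, $q+1=\ceil*{n/c}$. Define the ordered list $\B=\{b_1,\dots,b_c\}$ by $b_i=q+1$ for $1\le i\le r$ and $b_i=q$ for $r<i\le c$. The point of this choice is the identity $\sum_{i=1}^c b_i = r(q+1)+(c-r)q = qc+r = n$.

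The one step that needs care — because \bcp imposes only upper bounds on color-class sizes whereas \ecp demands near-equality — is a tightness observation. In any proper budgeted coloring $\varphi$ of $(G,c,\B)$ every vertex is colored, so $\sum_{i=1}^c |V_i| = n = \sum_{i=1}^c b_i$; together with $|V_i|\le b_i$ for every $i$ this forces $|V_i|=b_i$ for all $i$. Hence the multiset of class sizes of $\varphi$ is exactly $\{\,q+1\ (r\text{ times}),\ q\ (c-r\text{ times})\,\}$, i.e.\ $\varphi$ is an equitable $c$-coloring of $G$. Conversely, an equitable $c$-coloring of $G$ necessarily has $r$ classes of size $q+1$ and $c-r$ classes of size $q$ (the sizes must sum to $n$); relabeling colors so the $r$ larger classes get colors $1,\dots,r$ produces a proper coloring respecting $\B$. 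Thus $(G,c)$ is a yes-instance of \ecp iff $(G,c,\B)$ is a yes-instance of \bcp.

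Finally, building $\B$ takes time polynomial in $n$, and since $G$ (hence its structural parameter $k$) is unchanged and $G\in\Ge$, running the assumed algorithm on $(G,c,\B)$ decides \ecp on $(G,c)$ within the stated bound $f(k,|V(G)|)$, absorbing the polynomial overhead of constructing $\B$. The only thing to check is the degenerate regime $n<c$, where $q=0$ and some budgets equal $0$: the argument goes through verbatim, with the size-$0$ classes simply left empty. Beyond stating the tightness argument cleanly, I do not anticipate any real obstacle.
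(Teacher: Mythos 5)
Your argument for the \ecp half is correct and is essentially the paper's own route: the paper's Lemma~\ref{lemma:allsame} is exactly this parameter-preserving specialization of \bcp to \ecp via the budgets $\ceil*{n/c}$ (taken $r$ times) and $\floor*{n/c}$ (taken $c-r$ times), with the same tightness observation that $\sum_i b_i = n$ forces $|V_i|=b_i$, and the corollary is then just its contrapositive. The only thing missing is that the statement also covers \bocp, which you never address; that case is immediate (and should at least be stated), since \bocp with bound $d$ is literally the instance of \bcp with $b_i=d$ for all $i$, so hardness transfers with no reduction needed.
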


It follows from Corollary \ref{cor:hard} that \bcp is \nph on interval graphs and co-graphs as \ecp is \nph on these class of graphs\cite{gomes2019structural}.  Similarly as \ecp is \woh on bounded treewidth graphs~\cite{FELLOWStw}, \bcp is also \woh parameterized by treewidth.

\begin{table}[!htb]
	\begin{center}
		
		\begin{tabular}{|p{0.12\textwidth}|C{0.17\textwidth}|C{0.21\textwidth}|C{0.25\textwidth}|C{0.25\textwidth}|} 
			\hline
			\textbf{Graph Class} & \textbf{Chromatic Number} & \textbf{Equitable Coloring} & \textbf{Bounded Coloring} & \textbf{Budgeted Coloring} \\
			\hline
			Bipartite & {\sf Polynomial Time} & \nph even for three colors $^\star$ & \nph  even for three colors \cite{BodlaenderJ93} & \nph  even for three colors \\
			\hline
			Cluster & {\sf Polynomial Time} & {\sf Polynomial Time} \cite{gomes2019structural}  & {\sf Polynomial Time} & {\sf Polynomial Time}$^\star$ \\
			
			\hline
			Split & {\sf Polynomial Time} & OPEN & $\textsf{Polynomial Time}$ \cite{chen1995equitable} & \nph $^\star$ \\
						\hline
			Co-Cluster & {\sf Polynomial Time} & {\sf Polynomial Time} & {\sf Polynomial Time} & \nph $^\star$ \\

			\hline
			
		\end{tabular}
		\vspace{5pt}
		\caption{Summary of Results in Different Graph Classes; Results marked with $\star$ are in this paper.}\label{tab:table1}
	\end{center}
\end{table}

\begin{table}[!htb]
	\begin{center}
		
		\centering
		\begin{tabular}{|p{0.4\textwidth}|C{0.15\textwidth}|C{0.15\textwidth}|C{0.15\textwidth}|C{0.15\textwidth}|} 
			\hline
			\textbf{Parameters} & \textbf{Chromatic Number} & \textbf{Equitable Coloring} & \textbf{Bounded Coloring} & \textbf{Budgeted Coloring}  \\
			\hline
			Cluster Vertex Deletion (CVD) Size & \fpt & \fpt \cite{gomes2019structural}& OPEN & OPEN \\
			\hline
			CVD Size + Number of Colors & \fpt  & \fpt \cite{gomes2019structural}  & \fpt $^\star$ & \fpt $^\star$   \\
			\hline
			CVD Number + Number of Clusters &\fpt   &  \fpt \cite{gomes2019structural}  &  \fpt $^\star$   & \fpt $^\star$   \\
			\hline
			Vertex Cover & \fpt~\cite{bodlander_cross}&  \fpt\cite{fiala2011parameterized} & \fpt $^\star$  & \fpt $^\star$, Polynomial kernel unlikely $^\star$  \\
			\hline
			Distance to Clique & \fpt~\cite{paramclique}&  Linear Kernel \cite{gomes2019structural} & \fpt $^\star$  & \fpt $^\star$,  Polynomial Kernel Unlikely $^\star$ \\
			\hline
			
		\end{tabular}
		\vspace{5pt}
		\caption{ Summary of Results in Parameterized Setting; Results marked $\star$ are in this paper.}\label{tab:table2}
	\end{center}
\end{table}

\noindent {\bf Our Results:} In this paper, we first show \nph and polynomial time results for \bcp in some graph classes. For the most part, we design fixed-parameter tractable algorithms parameterized by cluster vertex deletion set size (i.e. minimum number of vertices whose removal makes the graph a cluster graph -- a collection of cliques) and vertex cover size. The results (including previously known results on \ecp and \bocp, for contrast) are summarized in Table~\ref{tab:table1} and Table~\ref{tab:table2}. We also give an $\mathcal{O}^*(2^{|V(G)|})$ exact algorithm for the problem generalizing the known algorithm with a similar bound for the standard chromatic number.

\section{Notation and Preliminaries} \label{sec:not}

\noindent \textbf{Graph Notations}

All the graphs considered used in our study are simple (free form self loops and parallel edges), finite, and undirected. In a graph $G(V,E)$, for any vertex $v \in V$, we denote its open neighborhood as $N(v)=\{u \vert (uv) \in E\}$. Define $N_G(v_i)= N(v_i)\cap V(G)$. For any vertex $v \in V$, its degree is denoted by $deg(v)=|N(v)|$. For any subset $S \subset V$, and $v \in V \setminus S$, degree of $v$ in $S$ is denoted as $deg_{S}(v)=|N(v)\cap S|$. For any $S \subseteq V$, its open neighborhood is denoted as $N(S)$ and defined as $N(S)=\{u \in V \setminus S| \exists v \in S \text{ and } (uv) \in E \}$ and $N[S]=N(S)\cup S$. For any subset of vertices $S$ of $V$, the subgraph induced by $S$ is denoted as $G[S]$. For a graph $G$ and a subset of vertices $S \subseteq V$, by $G-S$ we denote the graph induced by the vertex set $G \setminus S$. Given a graph $G$, its complement graph $G^c$, is defined as the graph with same set of vertices but the edges in $G$ become non-edges in $G^c$ and non-edges in $G$ become edges in $G^c$.  We denote $\K_n$ as complete graph with $n$ vertices and $\K_{n,n}$ as a complete bipartite graph with $n$ vertices in each part.  A subset of the vertices of a graph is said to be a vertex cover if for every edge at least one end vertex is in the subset.

\noindent \textbf{Graph Classes}
A graph is said to be {{\textsc{bipartite}}} if its vertex set can be partitioned into two subsets such that two endpoints of every edge belong to two different partitions. A {\sc star} is a complete bipartite graph where one part contains one vertex only and the other part contains all other vertices. A {\sc broom} is a star where only one edge can be extended as a path.   A graph is said to be {\sc split} if its vertex set can be partitioned into a clique and an independent set. A graph is said to be a {\sc cluster} if it is a collection of disjoint cliques.  A {\sc chordal} graph is graph with no induced cycle of size at least $4$. A {\sc co-cluster} is complement graph of {\sc cluster} graph i.e., complete multipartite graph.

\noindent \textbf{Parameterized Complexity }
A { parameterization} of a problem is assigning an integer $\ell$ to each input instance, and we say that a parameterized problem is {\em fixed-parameter tractable 	(\fpt)} if there is an algorithm that solves the problem in time
$\mathcal{O}(f(\ell)\cdot |I|^{O(1)})$ (also written as $\mathcal{O}^*(f(\ell)))$, where $|I|$ is the size of the input and $f$ is an
arbitrary computable function depending on the parameter $\ell$
only. 
Another major research field in parameterized complexity is kernelization. 
A parameterized  problem is said to admit a {\em polynomial kernel} 
if any instance $(I,k)$ can be reduced  to an equivalent instance $(I',k')$, in polynomial time, with $\vert I' \vert$ and 
$k'$ bounded by a polynomial in $k$. 
The {\sf W} hierarchy is a collection of computational complexity classes. If a problem belongs to \woh it implies the non-existence of a \fpt-approximation algorithm for the problem under the standard parameterized complexity hypothesis \woh $\neq$ \fpt.
For more background, the reader is referred to the monographs \cite{ParamAlgorithms15b,DowneyFbook13,fomin2019kernelization,Niedermeier06}.

\noindent \textbf{Graph Parameters} 
The \textsc{cluster vertex deletion number} (CVD) of a graph is defined as the minimum number of vertices that need to be deleted such that the remaining graph becomes a cluster graph. Such a set of vertices is called a \textsc{cluster vertex deletion set}.   Boral et al. \cite{boral2016fast} have showed that given a graph $G$, there exist an \fpt algorithm to find out whether there exists a cluster vertex deletion set for $G$ of size at most $k$ or not with runtime $\mathcal{O}^{*}(1.9102^{k})$.  
A graph $G$ is said to be $k$  \textsc{distance to clique} if there exists a $A\subseteq V(G)$ with $|A|=k$ such that $G[V\setminus A]$ is a clique. 
As shown in \cite{gutin2019parameterized}, for any graph $G$, determining whetheror not  $G$ is $k$  \textsc{distance to clique}, can be computed in $\mathcal{O}^{*}(1.2738^{k})$ time and can be approximated within a factor of two.
Determining whether there is a vertex cover of size $k$, parameterized by $k$ is $\sf FPT$ and can be computed in $\mathcal{O}^{*}(1.2738^{k})$ time.

\noindent \textbf{Known \npc Problems}

\defproblem{\sc Subset Sum Problem}{A set of integers $X=\{x_{1}, x_{2}, \ldots, x_{p}\}$, and an integer $w$.}{Does there exist a subset $\mathcal{X} \subseteq X$ such that $\underset{x_{i} \in \mathcal{X}}{\sum} x_{i}=w$?}
\defproblem{\sc Dominating Set Problem}
{An undirected graph $G(V,E)$, and a positive integer $k$.}
{Is there a subset $S \subseteq V$ such that $|S|=k$ and $N[S]=V(G)$?}

\defproblem{\sc Bi-clique Problem}{A bipartite graph $G(V_{1}, V_{2}, E)$, and a positive integer $k$.}{ Is there a complete bipartite graph of size $k \times k$ as a subgraph of $G$?}

\defproblem{\sc Clique Problem}{An undirected graph $G(V,E)$, and a positive integer $k$.}{Is there a complete subgraph of $k$ vertices in $G$?}

\defproblem{\partp}{A set of integers $X=\{x_{1}, x_{2}, \ldots, x_{p}\}$, and an integer $w$.}{Does there exist a partition of $X$ into $\frac{p}{3}$ triplets such that the elements in each triplets add up to $w$?}

\noindent \textbf{Known Results for Coloring}
It is a well-known fact that a bipartite graph is $2$-colorable and it can be checked in polynomial time.  For the class of cluster graphs, the chromatic number is the size of the largest clique, this can also be verified in polynomial time. Note that split graphs are a sub-class of chordal graphs. The classical coloring problem can be solved in polynomial time. Therefore, the coloring problem is also polynomial time solvable on split graphs. Next, we move to the domain of parameterized complexity.  Even though the classical coloring problem parameterized by CVD number is folklore {\sf FPT} result, for the completeness of our paper we briefly describe an approach to solve the problem in {\sf FPT} time. Due to ~\cite{gomes2019structural}, it is known that \ecp is {\sf FPT} parameterized by CVD. Next, by showing a  parameter preserving reduction to \ecp we prove that the coloring problem is \fpt when parameterized by CVD. 
\begin{lemma}
	Coloring problem is \fpt parameterized by CVD.
\end{lemma}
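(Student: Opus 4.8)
The plan is to give a parameter-preserving reduction from the classical coloring problem to \ecp and then invoke the fact that \ecp is \fpt parameterized by CVD \cite{gomes2019structural}. Fix an instance in which we wish to decide whether a graph $G$ on $n$ vertices is properly $c$-colorable. First I would construct the graph $G^+$ obtained from $G$ by adding $(c-1)\cdot n$ isolated vertices, so that $|V(G^+)| = cn$. The claim is that $G$ is $c$-colorable if and only if $G^+$ admits an equitable $c$-coloring. For the forward direction, take any proper $c$-coloring of $G$ with colour classes $V_1,\dots,V_c$ of sizes $n_1,\dots,n_c$; since each $n_i\le n$ and $|V(G^+)|/c = n$, we can distribute the $(c-1)n$ new isolated vertices so that colour class $i$ receives exactly $n-n_i$ of them, producing a proper $c$-coloring of $G^+$ in which every class has size exactly $n$, i.e. an equitable $c$-coloring. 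For the converse, the restriction of any equitable (indeed any proper) $c$-coloring of $G^+$ to $V(G)$ is a proper $c$-coloring of $G$.

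Next I would argue the reduction does not increase the parameter. If $S\subseteq V(G)$ is a cluster vertex deletion set of $G$, then $G^+ - S$ is the disjoint union of the cluster graph $G-S$ with a set of isolated vertices; isolated vertices are themselves (trivial) cliques, so $G^+ - S$ is again a cluster graph, and hence $S$ is a cluster vertex deletion set of $G^+$ of the same size. Consequently the CVD number of $G^+$ is at most the CVD number of $G$.

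Finally, to compute $\chi(G)$ I would run the above reduction for each $c\in\{1,2,\dots,n\}$: build $G^+$ (which has $cn\le n^2$ vertices and CVD at most that of $G$) and apply the \fpt algorithm of Gomes et al.\ \cite{gomes2019structural} for \ecp parameterized by CVD. Output the least $c$ for which the resulting \ecp instance is a yes-instance; by the equivalence above this value equals $\chi(G)$. The total running time is $n$ times $f(\mathrm{CVD}(G))\cdot\mathrm{poly}(n)$ for the function $f$ coming from the \ecp algorithm, which is \fpt in CVD.

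The only point that needs care is the amount of padding: we must choose it so that the per-class target size $|V(G^+)|/c$ is at least $n$, which guarantees that every original colour class can be padded up to exactly that size; this is why $(c-1)n$ (rather than fewer) isolated vertices are added. With this choice the equitable constraint forces every class to have size exactly $n$, and the correspondence between equitable $c$-colorings of $G^+$ and proper $c$-colorings of $G$ becomes immediate.
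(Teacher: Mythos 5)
Your proposal is correct and matches the paper's proof essentially verbatim: both pad $G$ with $(c-1)n$ isolated vertices (singleton clusters, so the CVD parameter is unchanged), show a proper $c$-coloring of $G$ corresponds to an equitable $c$-coloring of the padded graph, and invoke the \fpt algorithm for \ecp parameterized by CVD from~\cite{gomes2019structural}. The only cosmetic difference is that you iterate over all $c$ to compute $\chi(G)$, whereas the paper treats $c$ as part of the input instance.
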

\begin{proof}
	Let $\I= (G,c)$ be a given instance of coloring problem. We create an instance $\I'=(G', c)$ of \ecp as follows. $V(G')= V(G)\cup W$, where $|W|=nc-n$, $E(G')= E(G)$. Observe that each vertex in $W$ is a singleton cluster. Thus the parameter remains same.
	
	Now, we prove that  $\I$ is a YES instance if and only if $\I'$ is also a YES instance. Assume $G$ admits a proper coloring with $c$ colors. Let each color $i$ is used to color $n_i$ many vertices in $G$. Note that, $|V(G')|=nc$. Thus each color must be used exactly $n$ times in an equitable coloring.  As $G'[W]$ is an independent set,  we can use any color to color the vertices in $W$. For each color $i$, we color $n-n_i$ vertices of $W$. Thus we have an equitable coloring of $G'$. Next assume that we have an equitable coloring of $G'$. As $G$ is a subgraph of $G'$, the graph $G$ also admits a proper coloring using at most $c$ colors. This completes the proof.    
\end{proof}

\section{On Special Graphs}\label{sec:special}
In this section we derive the complexity of \bcp on cluster graphs (a collection of cliques) and co-cluster graphs (complete multipartite graphs), split graphs and bipartite graphs. The result on cluster graphs will be used in the next section when we generize to look at parameterization by cluster deletion set.
\subsection{On Cluster Graphs}\label{sec:cluster}
\bcp is trivial on cliques. Though \bcp can be solved in polynomial time on cluster graphs by constructing flow network~\cite{gomes2019structural}, we give an alternate simpler algorithm. Next, we have the following lemma.

\begin{lemma}\label{lemma:cluster_poly}
	Let $\I=(G,c,\B)$ be an instance of \bcp where $G$ is a cluster graph with a set $\{\K_1, \K_2,$ $ \cdots, \K_\ell\}$ of clusters sorted in non-increasing order of their sizes. 
	If $\I$ is a YES instance, then there exists a proper coloring of $G$ 
	where the largest cluster $\K_1$ is colored with $|V(\K_1)|$ colors having the largest $|V(\K_1)|$ budgets.
\end{lemma}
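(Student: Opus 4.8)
The plan is a local‑exchange argument on colorings: I will start from an arbitrary \emph{proper budgeted coloring} and repeatedly fix one color of $\K_1$ at a time until $\K_1$ uses exactly the $k_1 := |V(\K_1)|$ colors of largest budget.

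Concretely, suppose $\I$ is a YES instance, so a proper budgeted coloring exists; among all of them pick one, $\varphi$, maximizing $|S \cap T|$, where $S$ is the set of colors $\varphi$ uses on $\K_1$ and $T$ is the set of the $k_1$ colors with the largest budgets (fix any tie‑break, say by index). Since $\K_1$ is a clique, $|S| = k_1$, and since $\I$ is a YES instance $k_1 \le c$, so $|T| = k_1$ as well. I claim $S = T$, which is exactly the statement. Assume not; then $T\setminus S$ and $S\setminus T$ are nonempty of equal size, so I can pick $i \in T\setminus S$ and $j \in S\setminus T$, and by the definition of $T$ we have $b_i \ge b_j$. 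Let $v$ be the unique vertex of $\K_1$ colored $j$ (unique since $\K_1$ is a clique). I will build a proper budgeted coloring $\varphi'$ in which $\K_1$ uses $(S\setminus\{j\})\cup\{i\}$; since $i\in T$ and $j\notin T$ this strictly increases the intersection with $T$, contradicting the choice of $\varphi$.

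There are two cases. If color $i$ is not saturated, i.e.\ $|V_i| < b_i$, I just recolor $v$ with $i$: this stays proper inside $\K_1$ (the other vertices of $\K_1$ use colors in $S\setminus\{j\}$, which excludes $i$), and the only counts that change are $|V_i|$, now at most $b_i$, and $|V_j|$, which drops. If $i$ \emph{is} saturated, $|V_i| = b_i$, I first need to free a slot of color $i$ outside $\K_1$. The key observation is that some cluster $\K_m$ with $m\ge 2$ uses $i$ but not $j$: otherwise every cluster using $i$ would also use $j$, and since in a cluster graph each color class meets each clique at most once, the clusters using $j$ would include all $|V_i|$ clusters using $i$ plus $\K_1$ itself (which uses $j$ but not $i$), forcing $|V_j| \ge |V_i| + 1 = b_i + 1 \ge b_j + 1$, contradicting $|V_j| \le b_j$. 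Given such a $\K_m$, let $w$ be its unique vertex colored $i$; I recolor $v$ with $i$ and $w$ with $j$. This is proper ($\K_m$ did not use $j$, so $w$ may take $j$; $\K_1$ did not use $i$, so $v$ may take $i$), and the net effect on color classes is $V_i \mapsto (V_i\setminus\{w\})\cup\{v\}$ of size $b_i$ and $V_j \mapsto (V_j\setminus\{v\})\cup\{w\}$ of size $|V_j|\le b_j$, all other classes untouched. Either way $\varphi'$ is a proper budgeted coloring with $\K_1$ using $(S\setminus\{j\})\cup\{i\}$, the desired contradiction.

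The single delicate point is the saturated case, and it is resolved entirely by the counting inequality $|V_j| \ge |V_i| + 1$ above; everything else is routine bookkeeping using that a clique uses each color at most once. (Note the argument never actually uses that $\K_1$ is the \emph{largest} cluster — that hypothesis is what makes the lemma the right tool for a greedy algorithm that assigns the largest cluster first and recurses on the rest.)
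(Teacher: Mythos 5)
Your proof is correct and follows essentially the same route as the paper: an extremal choice of a proper budgeted coloring followed by an exchange argument, with the same two cases (target color unsaturated: recolor directly; saturated: find a cluster that uses the target color but not $j$ via the budget-counting inequality, then swap). The only difference is cosmetic — you maximize $|S\cap T|$ while the paper sorts budgets and maximizes the longest prefix of top-budget colors used on $\K_1$ — and your counting step is in fact spelled out more carefully than the paper's.
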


\begin{proof}
	Without loss of generality, assume that the budgets in $\B$ are sorted in non-increasing order i.e. $b_1\geq b_2 \geq \cdots \geq b_c$. For any proper budgeted coloring $\varphi$, with slight abuse of notation let us define $\varphi(\K_1)$ as the set of colors used to color the vertices of the clique $\K_1$, formally $\varphi(\K_1)=\{ i\in [c]|~\exists v\in V(\K_1) \text{ such that } \varphi(v)=i\}$.    
	For any proper budgeted coloring $\varphi$, $\iota(\varphi)$ denotes the largest integer such that all the colors in $\{1,2, \cdots ,\iota(\varphi)\}$ are used to color some vertex of $\K_1$, i.e. $\iota(\varphi)=\max_j \{j|~[j]\subseteq \varphi(\K_1)\}$. Let $\varphi^*$ be the proper budgeted coloring which maximizes $\iota(\varphi)$ over all possible proper budgeted coloring $\varphi$. Observe that if $\iota(\varphi^*)=|V(\K_1)|$, then nothing to prove. Thus let us assume that $\iota(\varphi^*)=i< |V(\K_1)|$ and there exist a color $j>i$ which is used to color one vertex say $v$ of $\K_1$ in $\varphi^*$. Observe the color $i+1$ is not used to color any vertex of $\K_1$. If $|V_{i+1}|<b_{i+1}$ then we can color $v$ with $i+1$ and produce a proper budgeted coloring with greater $\iota$ value, contradicting the maximality of $\varphi^*$. Thus we assume that $|V_{i+1}|=b_{i+1}$. Observe that $j>i+1$ hence $b_{i+1}\geq b_j$. As $j\in \varphi^*(\K_1)$ and $(i+1)\notin \varphi^*(\K_1)$  therefore there exist a cluster $\K_\ell$ such that $(i+1)\in \varphi^*(\K_\ell)$ and $j\notin \varphi^*(\K_\ell)$. Let $u\in \K_\ell$ be the vertex such that $\varphi^*(u)=j$. By exchanging colors of $u$ and $v$ we can  produce a proper budgeted coloring with greater $\iota$ value, contradicting the maximality of $\varphi^*$. Hence contradiction and the claim holds.
\end{proof}
\begin{algorithm}
	\DontPrintSemicolon 
	\KwIn{$(G,c,\B)$ where $G$ is cluster graph with set of clusters $\K= \{\K_1, \K_2,$ $ \cdots, \K_\ell\}$, $c$ colors $\{1,2, 3, \cdots, c\}$ and
		$\B=\{b_1,b_2, b_3, \cdots, b_c\}$ be the budgets of the colors. }
	\KwOut{Returns a proper coloring of $G$ respecting the budgets if the given instance is YES instance otherwise, returns NO.}
	$K \gets $ sorted list of cliques based on their sizes in non-increasing order.\\
	$B \gets$ sorted list of budgets in non-increasing order.\\
	\For{$i \gets 1$ \textbf{to} $\ell$}{
		$k_i \gets $ the size of $K[i]$\\
		\For{$j \gets 1$ \textbf{to} $k_i$}{
			$v_{ij} \gets $ $j$-th vertex of $i$-th clique\\
			$c \gets$ color corresponding to budget $B[j]$\\
			\If {$B[j] \geq 1$} {
				$\mathbb{C}(v_{ij}) \gets c$\;
				$B[j] \gets B[j]-1$ 
			}
			\Else {\Return {NO}}
		}
		sort $B$ in non-increasing order.
	}
	\Return{$\mathbb{C}(V(G))$}\;
	\caption{\bcp on cluster graphs}
	\label{algo:cluster}
\end{algorithm}

Lemma \label{lemma:cluster_poly} leads to the greedy Algorithm~\ref{algo:cluster}, establishing the following theorem.
\begin{theorem}
	\bcp on the class of Cluster Graphs can be solved in polynomial time.
\end{theorem}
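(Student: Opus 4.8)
The plan is to verify that the greedy procedure of Algorithm~\ref{algo:cluster} is both polynomial-time and correct, with Lemma~\ref{lemma:cluster_poly} supplying the structural fact that makes greediness safe. The running time is immediate: the outer loop runs $\ell \le n$ times, the inner loop runs $k_i \le n$ times, and each inner iteration does constant work; the re-sorting of $B$ after each outer iteration costs $\mathcal{O}(c \log c)$, and we may assume $c \le n$ (a color with budget larger than $n$ never binds, and more than $n$ colors are never needed), so the total is polynomial in the input size.

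For soundness I would check that whenever the algorithm outputs a coloring $\mathbb{C}$, it is a proper budgeted coloring. Within the $i$-th clique, the $j$-th vertex receives the color associated with the $j$-th largest current budget, so the $k_i$ vertices of that clique get pairwise distinct colors; since there are no edges between distinct clusters, the coloring is proper. Moreover a color is assigned only when its residual budget is at least $1$, and that budget is decremented at each use, so color $i$ is used on at most $b_i$ vertices overall.

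The substantive part is completeness: if $\I=(G,c,\B)$ is a YES instance, the algorithm does not return NO. I would argue by induction on the number $\ell$ of clusters, the case $\ell = 0$ being trivial. For the step, let $\K_1,\dots,\K_\ell$ be sorted in non-increasing order of size. By Lemma~\ref{lemma:cluster_poly} there is a proper budgeted coloring of $G$ in which $\K_1$ is colored, each vertex once, using exactly the $k_1 = |V(\K_1)|$ colors with the largest budgets; in particular each of those $k_1$ budgets is at least $1$, so the algorithm does not fail while processing $\K_1$, and the colors it assigns to $\K_1$ are precisely those $k_1$ colors. Now delete $\K_1$ and replace each of the $k_1$ largest budgets $b$ by $b-1$ (a color whose budget drops to $0$ becomes unusable, i.e. is removed); this produces an instance $\I'$ on the cluster graph $G - V(\K_1)$ with $\ell - 1$ clusters, and the restriction of the above coloring to $G - V(\K_1)$ uses each affected color at most $b-1$ times and each other color at most $b$ times, so $\I'$ is a YES instance. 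Because the clusters were globally sorted, $\K_2$ is a largest cluster of $G - V(\K_1)$, and the state of the list $B$ after the first outer iteration — once re-sorted — is exactly the budget list of $\I'$. Hence the remainder of the algorithm's run on $\I$ is identical to its run on $\I'$, which by the induction hypothesis does not return NO.

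The one place that needs care, and the step I expect to be the main obstacle to write cleanly, is this completeness argument: the exchange/rearrangement reasoning itself is already isolated in Lemma~\ref{lemma:cluster_poly}, so the remaining work is the bookkeeping needed to confirm that peeling off the largest cluster yields a strictly smaller instance of \bcp on a cluster graph, that the restricted coloring still satisfies the decremented budgets, and that the algorithm's internal state after one outer iteration coincides with that residual instance — so that the induction actually closes.
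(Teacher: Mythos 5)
Your proposal is correct and follows the paper's intended argument: the paper proves Lemma~\ref{lemma:cluster_poly} and then simply asserts that it ``leads to'' the greedy Algorithm~\ref{algo:cluster}, and your induction on the number of clusters (peel off the largest clique, decrement the top $k_1$ budgets, observe the residual instance is a smaller YES instance matching the algorithm's re-sorted state) is exactly the bookkeeping the paper leaves implicit. The only cosmetic caveat is that with tied budgets the lemma's coloring may use a different but budget-equal set of colors than the algorithm, which, as you note, is harmless since the induction only depends on the multiset of residual budgets.
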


\subsection{On Co-Cluster Graphs}\label{sec:co-cluster}
In this section we present another contrasting result. While \ecp is known to be polynomial time solvable on co-cluster graphs, we prove that \bcp is \nph on this class of graphs. It is not hard to verify that \bocp can also be solved in polynomial time. Now, we prove the following theorem by showing a reduction from \partp. 

\begin{theorem}\label{th:co-cluster_hard}
\bcp on co-cluster graphs is \nph.
\end{theorem}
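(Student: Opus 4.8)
The plan is to reduce from \partp. Recall that a co-cluster graph is a complete multipartite graph $G = K_{n_1, n_2, \ldots, n_t}$: its vertex set is partitioned into independent sets $I_1, \ldots, I_t$ (the ``parts''), and every pair of vertices in different parts is adjacent. The crucial structural observation is that in any proper coloring of such a graph, each color class must be entirely contained in a single part $I_j$ (since two vertices in different parts are adjacent and cannot share a color). Conversely, any assignment of colors that is constant-free within... more precisely, any partition of each $I_j$ into color classes (using disjoint color sets across parts) yields a proper coloring. So a proper budgeted coloring of $G$ with $c$ colors and budgets $\B = \{b_1, \ldots, b_c\}$ exists if and only if we can partition the multiset of budgets $\{b_1, \ldots, b_c\}$ into $t$ groups, one assigned to each part $I_j$, such that the budgets assigned to part $I_j$ sum to at least $|I_j| = n_j$.

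Given an instance $(X = \{x_1, \ldots, x_p\}, w)$ of \partp, where we must decide whether $X$ can be partitioned into $p/3$ triples each summing to $w$ (and $\sum_i x_i = (p/3)w$), the construction I would use is as follows. Build a co-cluster graph with $t = p/3$ parts, each of size exactly $w$ (so $n_j = w$ for every $j$). Set the number of colors $c = p$, and set the budget list $\B = \{x_1, x_2, \ldots, x_p\}$. Then $\sum_i b_i = (p/3)w = \sum_j n_j$, so the total budget exactly matches the total number of vertices; hence a proper budgeted coloring exists iff the budgets can be partitioned into $p/3$ groups each summing to \emph{exactly} $w$ (the ``at least'' constraints become equalities because the totals match), which is exactly a YES-instance of \partp. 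One should double-check the degenerate case where some $x_i = 0$: since \partp is \nph already when all integers are positive (and bounded polynomially in $p$, being strongly \nph), we may assume $x_i \geq 1$, so every part has positive size and the reduction is clean; the construction is clearly polynomial time.

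The main steps in order: (1) state and prove the structural lemma that proper colorings of a complete multipartite graph correspond exactly to assignments of disjoint color sets to the parts, so that a proper budgeted coloring amounts to distributing the budgets among the parts to cover their sizes; (2) describe the reduction from \partp as above; (3) prove the forward direction --- a valid $3$-partition $\{x_a, x_b, x_c\}$ for each part gives a coloring by coloring the $w$ vertices of that part with the three colors having budgets $x_a, x_b, x_c$, which fills them exactly; (4) prove the reverse direction --- from a proper budgeted coloring, each part $I_j$ receives a set of colors whose budgets sum to at least $n_j = w$; summing over all parts and using $\sum b_i = (p/3)w$ forces each sum to be exactly $w$, and since the $x_i$ are positive and each nonempty part of size $w \geq 1$ needs at least one color, a counting argument (or the standard fact that \partp instances are normalized so that any subset summing to $w$ has exactly three elements) shows each part gets exactly three colors, yielding the desired triple partition.

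The part that needs the most care is the reverse direction: ensuring that ``budgets summing to exactly $w$ assigned to each part'' really forces each part to get exactly three budget values, i.e. that the color-set-to-part assignment is a genuine $3$-partition and not, say, one part getting two large values and another getting four small ones. This is handled by invoking the standard normalization of \partp where each $x_i$ satisfies $w/4 < x_i < w/2$, which guarantees that every subset of $X$ summing to $w$ has cardinality exactly $3$; I would cite this normalized form when setting up the reduction. A secondary subtlety is that a proper budgeted coloring need not use all colors or need not fill every budget, but the exact matching of totals $\sum b_i = \sum n_j$ collapses all slack, so every used color class is full and every color is used --- this should be stated explicitly as part of step (4).
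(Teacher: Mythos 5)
Your proposal is correct and follows the same skeleton as the paper's proof: reduce from \partp, build a complete multipartite graph with one part per desired triple, and exploit the fact that in any proper coloring of a co-cluster graph each color class lies inside a single part, together with an exact match between total budget and total vertex count to kill all slack. The one genuine difference is how you force each part to receive \emph{exactly} three colors. You keep the budgets equal to the raw values $x_i$ and the parts of size $w$, and you invoke the normalized form of \partp (every $x_i$ satisfies $w/4 < x_i < w/2$, so any subset summing to $w$ has cardinality three); this is legitimate since \partp remains strongly \nph under that restriction, but note that the paper's own statement of \partp in the preliminaries does not include this normalization, so you must cite the restricted version explicitly. The paper instead pads: each part has $3n+w$ vertices and color $i$ has budget $n+x_i$ (with $n=\sum_i x_i$), so that two colors cover at most $2n+x_i+x_j < 3n+w$ vertices and four colors would exceed $3n+w$, forcing exactly three colors per part by the same total-budget-equals-total-vertices counting, with no assumption on the sizes of the $x_i$. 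Your version is slightly leaner but leans on the cited normalization; the paper's padding makes the argument self-contained against the unrestricted definition. Both correctly note that strong \nph{}ness (numbers polynomially bounded) is needed for the construction to be polynomial time.
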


\begin{proof}
 Let $\I=(S, w)$ be a \partp problem instance, where $S=\{x_1,x_2,x_2,\cdots, x_{3c}\}$ and $w=\frac{n}{c},~n=\sum x_i$. We create an instance $\I'=(G, 3c, \B)$ of \bcp as follows. Here $G$ is a co-cluster graph with $c$ parts, where each part has $3n+w$ many vertices and $\B=\{n+x_1, n+x_2, n+x_3, \cdots, n+x_{3c} \}$. Next, we show that $\I$ is a YES instance if and only if $\I'$ is a YES instance. 

    Let $\I$ is a YES instance, then there are $c$ sets containing three elements each, adding up to $w$.  We can color each part of G by using the corresponding three colors whose budgets add up to $3n+w$.  Thus $\I'$ is also a YES instance.

    Now, assume that $\I'$ is a YES instance. Observe that, one color can be used in only one part. If each part gets exactly three colors then we have nothing to prove. Therefore we consider the case when there exists at least one part that gets at most two colors or at least four colors. For the first case, say $i$ and $j$ be that two colors. As $2n+x_i+x_j< 3n+w$, all the vertices in that part is not colored with color $i$ and $j$. For the second case, observe that as $4n>3n+w$, some color is not fully used to color that part.  Also observe that the budgets add up to the number of the vertices in $G$. As a result, there is at least one part that contains some uncolored vertices, which is contradiction.    

Note that, \partp is \nph even when all the elements are polynomial in the number of elements in the set ~\cite{GareyJohnson}. Thus, we can safely assume that each $x_i$ is polynomial in $3c$. Hence, the reduction is also polynomial time. 
\end{proof}

\subsection{On Split Graphs}\label{sec:split}
It is known that \bocp is polynomial-time solvable on the class of Split Graphs~\cite{chen1994equitable}. Surprisingly it turns out that \bcp is \nph on Split Graphs.  We give a reduction from the \dsp and prove the following theorem. 
\begin{theorem} \label{SplitHard}
	The \bcp is \nph on Split Graphs.
\end{theorem}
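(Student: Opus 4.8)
The plan is to reduce from the \dsp, which is \npc. Let $(G, k)$ be an instance of dominating set with $n = |V(G)|$ and $V(G) = \{v_1, \ldots, v_n\}$. I would build a split graph $H$ whose clique part encodes the vertex set of $G$ and whose independent set part forces a chosen color class of bounded size to be a dominating set. Concretely, take a clique $Q = \{u_1, \ldots, u_n\}$ on $n$ vertices, one $u_i$ per vertex $v_i$ of $G$, and add an independent set $I$; the edges between $I$ and $Q$ will be set up so that ``a vertex of $I$ is happy'' exactly when the color class that is allowed to appear in $I$ hits $N_G[v_i]$. The number of colors $c$ and the budget list $\B$ are chosen so that exactly one color (say color $1$) may be used in $I$ at all, that color $1$ can be used on at most $k$ vertices of the clique, and every other color has budget $1$ (so the remaining $n-1$ clique vertices each need a private color). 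This way a proper budgeted coloring exists iff there is a set of at most $k$ clique vertices receiving color $1$ that, together, dominate $G$.

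The key steps, in order, are: (1) fix $c = n$ colors with $\B = \{k, 1, 1, \ldots, 1\}$ (one budget of $k$, and $n-1$ budgets of $1$); since $Q$ is a clique on $n$ vertices, all $n$ colors must be used on $Q$, color $1$ appears on exactly $\min(k, \cdot)$ — actually on some subset $D \subseteq Q$ with $1 \le |D| \le k$, and every other color appears on exactly one clique vertex. (2) For each $v_i$, create an independent-set vertex $w_i$ and join $w_i$ to every clique vertex \emph{except} those $u_j$ with $v_j \in N_G[v_i]$; then $w_i$ can be properly colored only with a color not used on $N_G[v_i] \cap Q$ — but all colors except $1$ are used on single, spread-out clique vertices, so the only color that can possibly be free at $w_i$ is color $1$, and color $1$ is free at $w_i$ iff no vertex of $N_G[v_i]$ got color $1$, i.e. iff $D \cap N_G[v_i] = \emptyset$. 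I need to be careful here: I want $w_i$ colorable iff $D$ dominates $v_i$, so I should instead join $w_i$ to all clique vertices $u_j$ with $v_j \notin N_G[v_i]$ and also to all but — let me restate: make $w_i$ adjacent to exactly the clique vertices $\{u_j : v_j \notin N_G[v_i]\}$ together with enough structure that color $1$ is usable at $w_i$ precisely when some $u_j$ with $v_j \in N_G[v_i]$ has color $1$. The cleanest device is to additionally add, for each non-neighbor relation, dummy independent vertices or to make each $w_i$ adjacent to all of $Q$ minus $\{u_j : v_j \in N_G[v_i]\}$ and give $w_i$ budget room only in color $1$; then $w_i$ gets color $1$ iff color $1$ is not blocked, and blocking happens exactly when $D \cap N_G[v_i] \ne \emptyset$ — so I actually want the \emph{complement}, meaning $w_i$ should be adjacent to $\{u_j : v_j \in N_G[v_i]\}$, forcing $w_i$ to avoid the colors on $N_G[v_i]$, hence forcing $v_i$ to be dominated by $D$ in order for color $1$ to remain available at $w_i$. (3) Verify $H$ is split (clique $Q$, independent set $I = \{w_1,\ldots,w_n\}$) and the construction is polynomial. (4) Prove both directions of the equivalence between a dominating set of size $\le k$ in $G$ and a proper budgeted coloring of $H$ respecting $\B$.

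For the forward direction: given a dominating set $D$ with $|D| \le k$, color the clique vertices corresponding to $D$ with color $1$, color the remaining $n - |D|$ clique vertices with $n - |D|$ distinct colors from $\{2, \ldots, n\}$, and color each $w_i$ with color $1$ — this is proper because every $w_i$ is adjacent only to clique vertices in $N_G[v_i] \cap D$-complement... here I must ensure the adjacency choice makes $w_i$ non-adjacent to every clique vertex colored $1$, which is why $w_i$'s neighborhood in $Q$ must be disjoint from $D$'s image; since $D$ dominates $v_i$, $N_G[v_i]$ meets $D$, so I instead should let $w_i$ be adjacent to $Q \setminus \{u_j : v_j \in N_G[v_i]\}$ — then $w_i$ is non-adjacent exactly to the clique vertices representing $N_G[v_i]$, at least one of which is in $D$ and colored $1$, but $w_i$ adjacent to that? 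No. I will settle the adjacency direction in the write-up by the constraint that color $1$ must be available at each $w_i$ iff $v_i$ is dominated, and check budgets: color $1$ is used on $|D| \le k$ clique vertices plus all $n$ vertices $w_i$ — so its budget must be $k + n$, not $k$; I will set $b_1 = k + n$ and keep the other $n-1$ budgets at $1$, and add $n$ to the total count accordingly. For the reverse direction: in any valid coloring, the clique forces $n$ distinct colors on $Q$; the $n-1$ unit-budget colors are each pinned to one clique vertex, so every $w_i$, being adjacent to $n-1$ of the clique vertices in a pattern that leaves only color $1$ possibly free, must receive color $1$, and this is possible for all $i$ simultaneously only if the set $D$ of clique vertices colored $1$ dominates every $v_i$; since $b_1 = k+n$ and the $w_i$'s consume $n$ of that budget, $|D| \le k$, giving a dominating set of size at most $k$. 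The main obstacle I anticipate is pinning down the exact adjacency pattern between $I$ and $Q$ and the exact budget values so that ``$w_i$ is colorable $\iff$ $v_i$ is dominated by the color-$1$ clique vertices'' holds cleanly without accidental extra freedom (e.g. a $w_i$ escaping to a color whose single clique vertex happens to be a non-neighbor); making the $w_i$ adjacent to \emph{all} clique vertices except the images of $N_G[v_i]$ removes that freedom, and I will need to double-check that this does not over-constrain color $1$ either.
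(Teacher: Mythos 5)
Your overall strategy (reduce from \dsp, clique part encodes $V(G)$, independent-set vertices force a bounded color class to dominate, unit budgets pin the remaining colors) is the same as the paper's, and the adjacency pattern you eventually lean toward --- $w_i$ adjacent to the clique images of the \emph{non}-neighbors of $v_i$, so that $w_i$ can reuse a color only if that color sits on a vertex of $N_G[v_i]$ --- is exactly the paper's construction ($E_2=\{(u_i,v_j)\mid (w_i,w_j)\notin E(G)\}$). However, there is a genuine flaw in your budget design that breaks the reduction. You want the dominating set to be ``the set $D$ of clique vertices receiving color $1$,'' with $1\le |D|\le k$, and you allocate a single large budget $b_1=k+n$ while every other color has budget $1$. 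But the clique $Q$ has $n$ vertices and you provide exactly $n$ colors, so every clique vertex gets a distinct color: \emph{at most one} clique vertex can ever be colored $1$, regardless of how large $b_1$ is. Consequently all $n$ independent-set vertices are forced onto color $1$ (the unit-budget colors are exhausted inside $Q$), and a valid coloring exists iff the single clique vertex colored $1$ is a non-neighbor of every $w_i$, i.e.\ iff $G$ has a dominating set of size $1$ --- a polynomially decidable question. So your reduction does not establish \nph{}ness for general $k$, and no choice of a single ``big'' budget can fix this, because the obstruction is the clique, not the budget.

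The paper's fix is to spread the slack over $k$ colors rather than one: it uses $c=n$ colors with $b_i=n+1$ for $i\in[k]$ and $b_i=1$ for the remaining $n-k$ colors. Then the $k$ clique vertices that receive the $k$ large-budget colors play the role of the dominating set $D$ (they are $k$ distinct vertices because they carry $k$ distinct colors), and each independent-set vertex must reuse one of these $k$ colors, which is possible precisely when some vertex of $D$ lies in its ($G$-)neighborhood. If you replace your single budget $b_1=k+n$ by $k$ budgets of value $n+1$ (or $n$, with the closed-neighborhood adjacency you prefer) and keep the rest at $1$, your argument goes through essentially verbatim; you should also commit explicitly to the adjacency direction (independent vertex joined to the images of non-neighbors), since the alternative you toy with (joined to the images of $N_G[v_i]$) makes domination \emph{block} color reuse and proves the wrong equivalence.
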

\begin{proof}
	Let, $\I=(G,k)$ be an arbitrary instance of the \dsp, where $V(G)=\{w_1, w_2, \cdots , w_n\}$ and $k \leq n$. We construct a \bcp instance $\I'=(G', n, \mathcal{B})$, where $G'$ is a split graph as follows.
	$V(G')= C\cup I$, where $C=\{u_1, u_2, \ldots, u_n\}$ be the clique and $I=\{v_1, v_2, \ldots, v_n\}$ be the independent set. $E(G')=E_1\cup E_2$, where $E_1=\{(u_i,u_j)| ~\forall i\neq j\}$ and $E_2=\{(u_i,v_j)| (w_i,w_j)\notin E(G) \}$. The number of colors $c$ is $n$, with budgets $b_i=n+1$ for $i\in [k]$ and the budget for the remaining colors is $1$.
	
	Next, we show that the graph $G$ has a dominating set of size $k$ if and only if there exists a proper budgeted coloring of $G'$.
	Assume that the graph $G$ has a dominating set $D$ of size $k$. Without loss of generality assume $D=\{w_1,w_2, \cdots , w_k\}$. Color $u_i$ with color $i$ where $i \in [k]$. As $D$ is a dominating set in $G$, each vertex in $I$ is adjacent to some vertex $d$ of $D$ in $G$, and hence is non-adjacent to that vertex in $G'$, and so can be colored with the color of $d$. Thus we can color every vertex in $I$ with the first $k$ colors. Now, we have only $n-k$ uncolored vertices in $C$. Color them with the last $n-k$ colors. Hence we have a proper budgeted coloring of $G'$.

	To prove the converse, note that in any proper budgeted coloring of $G'$, all $n$ colors must be used to color the vertices in $C$. 
	Without loss of generality assume that $u_i$ gets the color $i$, for $i=1$ to $n$. 
	Therefore the vertices in $I$ are colored with the first $k$ colors (as only they have budgets more than $1$). 
	Thus $\cup_{i\in [k]}\overline{N}(u_i)= I$. Hence $\{w_1, w_2, \ldots, w_k\}$ forms a dominating set in $G$. This completes the proof.		
\end{proof}

It is known that the dominating set problem is \wth with respect to the solution size ($k$) as a parameter. In the reduction described in Theorem \ref{SplitHard}, the number of colors having a budget greater than one is also $k$. Hence, this reduction is parameter preserving for the parameter `number of colors with a budget greater than one' leading to the following corollary.
\begin{corollary}
	\bcp parameterized by the number of colors with a budget greater than one is \wth.
\end{corollary}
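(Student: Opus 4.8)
The claim is that \bcp parameterized by the number of colors with budget greater than one is \wth. The plan is to observe that the reduction already constructed in the proof of Theorem~\ref{SplitHard} is parameter preserving for this parameter, and then invoke the known \wth-hardness of \dsp.

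First I would recall that the reduction of Theorem~\ref{SplitHard} takes an instance $(G,k)$ of \dsp and produces an instance $(G',n,\B)$ of \bcp in which exactly $k$ colors (colors $1$ through $k$) have budget $n+1>1$, while the remaining $n-k$ colors have budget exactly $1$. Thus the quantity ``number of colors with budget greater than one'' in the output instance equals $k$, which is precisely the parameter of the source \dsp instance. Since the reduction also runs in polynomial time (as already argued in the proof of Theorem~\ref{SplitHard}), it is a parameterized reduction from \dsp parameterized by solution size to \bcp parameterized by the number of colors with budget greater than one.

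Next I would invoke the standard fact, already cited in the excerpt, that \dsp parameterized by the solution size $k$ is \wth. Composing the parameterized reduction above with this hardness result yields \wth-hardness of \bcp for the stated parameter. The only point that needs a sentence of care is to confirm that nothing else about the output instance blows the parameter up beyond a function of $k$ alone; but since we are parameterizing \bcp only by the number of colors with budget greater than one (and not, say, by the total number of colors $c=n$), this is immediate from the construction.

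I do not anticipate a genuine obstacle here: the corollary is a bookkeeping observation about an already-proven reduction rather than a new argument. The one thing to be slightly careful about is the direction of the reduction and the fact that \wth-hardness (as opposed to membership in \textsf{W[2]}) is all that is claimed, so we need only exhibit a parameterized reduction \emph{from} a \wth problem \emph{to} \bcp with this parameter, which is exactly what Theorem~\ref{SplitHard}'s construction provides.
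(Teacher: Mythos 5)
Your proposal is correct and matches the paper's own argument: both simply observe that the reduction in Theorem~\ref{SplitHard} produces exactly $k$ colors of budget $n+1>1$ (the rest having budget $1$), so it is a parameter-preserving polynomial-time reduction from \dsp parameterized by solution size, whose known \wth{}ness transfers to \bcp under this parameter. Nothing is missing.
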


\subsection{On Bipartite Graphs} \label{sec:bipartite}

It was known that the bounded coloring problem is \nph on bipartite graphs when $c \geq 3$~\cite{BodlaenderJ93}. We extend this result for \ecp thus proving hardness for \ecp as well as \bcp (the later result follows from Corollary $2$.)

\begin{theorem}\label{th:equitable_bipartite}
	\ecp on Bipartite Graphs is \nph even when $c=3$. 
\end{theorem}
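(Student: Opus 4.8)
The plan is to reduce from a suitable NP-hard problem whose combinatorial structure mirrors the requirement that each color class in an equitable $3$-coloring of a bipartite graph has size exactly $\lfloor n/3\rfloor$ or $\lceil n/3\rceil$. A natural candidate is \textsc{3-Partition} (or equivalently a restricted \textsc{Numerical 3-Dimensional Matching / Bin Packing} variant), since the "each triple sums to $w$" condition translates cleanly into "each of three color-size slots is filled to a prescribed amount," and \textsc{3-Partition} is strongly NP-hard, so the numbers may be taken polynomially bounded. Alternatively, since the excerpt has already established (Theorem~\ref{th:co-cluster_hard}, Theorem~\ref{SplitHard}) a toolkit of \textsc{3-Partition}-style reductions, I would mimic that style: build a bipartite graph $H$ on roughly $3n$ vertices so that the two sides of the bipartition, together with forced local colorings of small gadgets, leave exactly the "partition freedom" of the source instance, and then choose the total number of vertices to be a multiple of $3$ so that "equitable" pins each class size to $n/3$ (or two consecutive values whose only feasible realization encodes a valid partition).

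Concretely, first I would fix an instance $(X=\{x_1,\dots,x_{3c}\}, w)$ of \partp with $n=\sum x_i = cw$. Second, I would construct a bipartite graph whose left part $L$ and right part $R$ are assembled from three kinds of pieces: (i) a "color-forcing" skeleton — an induced path or even cycle of appropriate parity — that forces a $2$-colored backbone and simultaneously creates a third color as the only option on certain pendant vertices, so that in any $3$-coloring the three colors partition naturally into "one on $L$-backbone, one on $R$-backbone, one free"; (ii) $3c$ "item gadgets," one per $x_i$, each an independent set (or a small complete bipartite blob $K_{t,t}$) of size proportional to $x_i$, attached so that all vertices of an item gadget must receive the same color among the three available; (iii) "padding" independent sets on each side to make $|V(H)|$ a multiple of $3$ and to make the equitable bound $n/3$ coincide exactly with $w$ plus a fixed offset. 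The forbidden adjacencies are arranged exactly as in Theorem~\ref{th:co-cluster_hard}: an item gadget of "weight $x_i$" can legally occupy color slot $j$ only if slot $j$ still has room, and each slot's capacity equals (offset)$+w$; a proper equitable $3$-coloring then exists iff the $3c$ items can be grouped into $c$ triples summing to $w$ each.

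Third, I would prove both directions of the equivalence: a valid $3$-partition yields an equitable $3$-coloring by assigning each triple to one of the $c$... wait — here I must be careful that we only have $3$ colors, not $3c$, so the gadget must be re-thought: the natural route is instead a reduction from \textsc{Bounded/Equitable} on a graph that is a disjoint-union-like structure where three color classes each of size $n/3$ act as "bins," and the $c$ cliques/parts of the source are replaced by forcing gadgets. I would therefore actually reduce from the \nph version of \bocp on bipartite graphs with $c=3$ (cited, \cite{BodlaenderJ93}): take such a \bocp instance $(G,3,d)$ on a bipartite $G$ with $|V(G)| = 3d$; if already $|V(G)|=3d$ then any proper $3$-coloring obeying $|V_i|\le d$ is automatically equitable, giving hardness immediately; otherwise pad with an independent set of $3d-|V(G)|$ new vertices added to one side of the bipartition (preserving bipartiteness), which are freely colorable, and this is a parameter-free polynomial reduction. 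The main obstacle is exactly this padding-to-a-multiple-of-$3$ step together with ensuring the padded vertices cannot be forced to violate the per-color bound — handled by attaching each padded vertex only to a single "universal-on-one-side" dummy so that all three colors remain available to it, and by choosing $d$ (equivalently rescaling the \bocp instance by disjoint copies) so the arithmetic $|V(G)| \le 3d$ and $3d - |V(G)| < $ (available slack) holds; I expect the bookkeeping that the resulting bipartite graph still has an equitable $3$-coloring \emph{iff} the original had a bounded one to be the only genuinely delicate part.
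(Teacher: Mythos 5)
Your final reduction (after you abandon the \partp sketch mid-way) is correct, but it takes a genuinely different route from the paper. The paper does not reduce from \bocp at all: it gives a self-contained reduction from the {\sc Bi-clique Problem}, building a bipartite graph on $3n-3k+3$ vertices consisting of the bipartite complement of the input together with a padding set $W$ of size $n-3k+1$ and two new vertices $a,b$, each joined to the entire opposite side; an equitable $3$-coloring then forces every class to have size exactly $n-k+1$, the classes of $a$ and $b$ are confined to one side each, and the third class must take at least $k$ vertices from each side, which form a $\K_{k,k}$ in the original graph. Your route instead leverages the cited hardness of \bocp on bipartite graphs with $c=3$~\cite{BodlaenderJ93}: pad with $3d-|V(G)|$ isolated vertices so that the total is $3d$, whence ``each class at most $d$'' and ``each class exactly $d$'' coincide. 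This is shorter and perfectly legitimate given the paper's own citation, at the price of depending on the exact form of that cited statement (three colors, budget $d$ part of the input), whereas the paper's proof is self-contained. Three small repairs to your write-up: drop the dummy-attachment idea --- joining a padded vertex to a dummy \emph{removes} a color rather than ``keeping all three available,'' and isolated padding vertices (your first formulation) are exactly what is needed; handle the trivial cases explicitly by mapping instances with $|V(G)|>3d$ to a fixed NO instance and observing that instances with $d\ge|V(G)|$ are trivially YES (bipartite graphs are $2$-colorable), which also keeps the padding size $3d-|V(G)|$ polynomial; and the ``delicate bookkeeping'' you defer is in fact immediate once the padding is isolated, since a bounded coloring with class sizes $n_i\le d$ extends to an equitable one by giving $d-n_i$ padded vertices color $i$, and conversely an equitable coloring of the padded graph restricts to a bounded coloring of $G$.
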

\begin{proof}
	We show a reduction from the biclique problem. Given a biclique problem instance $\I=(G(V_1 \cup V_2, E),k)$ where, $|V_1|=|V_2|=n$, the question is whether $G$ has $\K_{k,k}$ as a subgraph? We construct a \ecp instance $\I'=(G'(V'_1\cup V'_2, E'),3)$ as follows. $V'_1=V_1\cup \{a\}$, $V'_2=V_1 \cup W \cup \{b\}$ where $a$, $b$ are two new vertices and $|W|=n-3k+1$. The edge set $E'=\{E_1\cup E_2\cup E_3\}$ where, $E_1=\{(u,v) | u\in V_1, v\in V_2\ \text{and} \ (u,v)\notin E \}$, $E_2=\{(a,v)|v \in V'_2\}$ and $E_3=\{(b,u)|u \in V'_1\}$. Also we have three colors $1,2$ and $3$. Since $|V(G')|=3n-3k+3$, to achieve proper equitable coloring we must use each color exactly $n-k+1$ times. 
		\begin{claim}
			$\I$ is a YES instance iff $\mathcal{I'}$ is a YES instance.
		\end{claim}
		\begin{proof}
	Let $\I$ be a YES instance, i.e. $G$ has a $\K_{k,k}$ as a subgraph. Without loss of generality let $\{u_1, u_2, u_3, \cdots, u_k\}\in V_1$ and $\{v_1, v_2, v_3, \cdots, v_k\}\in V_2$ form the $\K_{k,k}$ in $G$. Thus in $G'$, $(u_i, v_j)\notin E',~\forall i,j \in [k]$. Use color $1$ to color all these vertices and the vertices in $W$. Color remaining $n-k+1$ vertices in both $V_1'$ and $V_2'$ with color $2$ and color $3$, respectively. Observe that we have a proper equitable coloring of $G'$.
	
	Next, assume that $G'$ admits a proper equitable coloring. Without loss of generality,  assume color $1$ and $2$ are used to color the vertices $a$ and $b$. Since $a$ is adjacent to all the vertices in $V_2'$, color $1$ must be entirely used at vertices in $V_1'$. Also, color $2$ can color at most $n-k+1$ many vertices of $V_2'$, more precisely, at most $n-k$ many vertices of $V_2$. Thus at least $k$ vertices from both the parts $V_1$ and $V_2$ must be colored with color $3$. Hence they are forming a $\K_{k,k}$ in $G$. This completes the proof.
	\end{proof}
\end{proof}

\subsubsection{When $c=2$.}
While the \bocp has been known to be hard for bipartite graphs when $c \geq 3$~\cite{BodlaenderJ93}, its complexity for $c=2$ (especially in disconnected bipartite graphs) doesn't seem obvious. We show that even the more general \bcp is polynomial time solvable when $c=2$
by a suitable reduction to the {\it subset sum problem} to prove the following.
\begin{theorem}\label{bipartitec=2}
The \bcp on the class of bipartite graphs can be solved in polynomial time when $c=2$.
\end{theorem}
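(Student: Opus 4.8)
The plan is to reduce \bcp on a bipartite graph $G$ with $c=2$ to polynomially many instances of the \textsc{Subset Sum Problem}. First I would handle the structure of $G$: since $c=2$, any proper $2$-coloring must be a proper $2$-coloring in the usual sense, so $G$ must be bipartite with no isolated edges creating parity obstructions — more precisely, each connected component $H_1,\dots,H_t$ of $G$ is connected bipartite, hence has exactly two proper $2$-colorings, obtained by swapping the two sides. Let the two sides of component $H_j$ have sizes $p_j \le q_j$ (with $p_j=q_j$ possible, and singleton components giving $p_j=0$, $q_j=1$). A proper $2$-coloring of all of $G$ is determined by choosing, for each component, which side receives color $1$; if for a subset $A \subseteq [t]$ we assign the ``small'' side to color $1$ exactly for $j \in A$, then $|V_1| = \sum_{j \in A} p_j + \sum_{j \notin A} q_j$ and $|V_2| = n - |V_1|$.

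Next I would observe that feasibility amounts to: does there exist $A \subseteq [t]$ with $|V_1| \le b_1$ and $n - |V_1| \le b_2$, i.e. $n - b_2 \le |V_1| \le b_1$? Writing $|V_1| = \sum_j q_j - \sum_{j \in A}(q_j - p_j) = Q - \sum_{j\in A} d_j$ where $Q = \sum_j q_j$ and $d_j = q_j - p_j \ge 0$, the question becomes whether some subset sum $\sum_{j \in A} d_j$ lies in the target interval $[\,Q - b_1,\ Q - (n - b_2)\,]$. This is exactly an interval version of \textsc{Subset Sum}; since each $d_j \le n$ and $t \le n$, all sums are at most $n^2$, so this can be solved in polynomial (indeed pseudopolynomial, but here genuinely polynomial because the numbers are bounded by $n$) time by the standard dynamic program over attainable subset sums $\{0,1,\dots,Q\}$. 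One then checks whether the reachable-sum table has a $1$ anywhere in the target interval. (Equivalently, one can invoke the stated \textsc{Subset Sum} oracle for each integer target in the interval, but the interval has at most $n+1$ integer points, so this is still polynomial.)

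Then I would assemble the algorithm: (i) verify $G$ is bipartite and compute its components and their side sizes in linear time; if $G$ is not bipartite, answer \textsc{No}; (ii) compute $Q$, the $d_j$'s, and the target interval $[\max(0,\,Q-b_1),\ \min(Q,\,Q-n+b_2)]$ (answering \textsc{No} immediately if this interval is empty); (iii) run the subset-sum DP on the multiset $\{d_1,\dots,d_t\}$ and answer \textsc{Yes} iff some attainable sum falls in the interval; (iv) to output an actual coloring, trace back through the DP to recover a witnessing set $A$ and color each component accordingly. Correctness follows from the one-to-one correspondence between proper $2$-colorings (up to the global per-component side choice) and subsets $A$, together with the algebra relating $|V_1|$ to $\sum_{j\in A} d_j$.

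The main obstacle — really the only subtle point — is making sure the reduction is genuinely polynomial rather than merely pseudopolynomial: \textsc{Subset Sum} is \nph in general, so one must exploit that here every $d_j \le n$ and there are at most $n$ of them, forcing every relevant partial sum into $\{0,\dots,n^2\}$, which bounds the DP table and hence the running time by a polynomial in $n$. A secondary point worth stating carefully is the degenerate bookkeeping: isolated vertices form components with $p_j = 0$, so $d_j = 1$, and these behave correctly in the same framework; and one should note that ``$c=2$'' forces us to use a proper $2$-coloring, so no component may be left with a color class exceeding its side — which is automatically handled since every component is colored by exactly one of its two valid $2$-colorings.
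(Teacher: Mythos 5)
Your proposal is correct and follows essentially the same route as the paper: decompose the bipartite graph into connected components, observe that each component's smaller side forces a mandatory contribution to both colors (equivalently, parametrize colorings by which side of each component gets color $1$), and reduce feasibility to a subset-sum question on the per-component differences $d_j=q_j-p_j$, solved by the pseudopolynomial dynamic program, which is genuinely polynomial since all values are bounded by $n$. The only cosmetic difference is that the paper phrases the final step as the standard subset sum problem (sum of the chosen set at most $b_1'$, complement at most $b_2'$, after subtracting the mandatory amount from both budgets) and cites an $\mathcal{O}(nw)$ algorithm, while you phrase it as hitting a target interval with a reachability DP.
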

\begin{proof}
	Let, $G(V_1\cup V_2, E)$ be the given bipartite graph where $|V_1\cup V_2|=n$, $b_1$ and $b_2$ be the budgets associated with the two colors. 
	Observe that, \bcp is polynomial time solvable when the $G$ is connected as we just have to ensure that the two parts of the partition satisfy the budget constraints. 
	When $G$ is disconnected, let $P_1, P_2, P_3, \cdots, P_\ell, \ell \geq 2$ be the connected components in $G$ where each connected component in the graph $P_i=(V^i_1 \cup V^i_2, E_i)$. Without loss of generality assume that $|V^i_1| \geq |V^i_2|$ for all $i$. In any proper coloring, at least budget $\sum_{i\in [\ell]}V^i_2=x$ is required for both the colors. Thus we can subtract $x$ from the budgets $b_1$ and $b_2$ (and if any of the budgets is less than $x$, then the problem is a NO instance). Let, $b'_1=b_1-x$ and $b'_2=b_2-x$ be the modified budgets for the colors $1$ and $2$ respectively. 
	Let $y_i= |V^i_1|- |V^i_2|$ for all $i \in [\ell]$. Now, each component $P_i$ has $y_i$ many uncolored vertices that can be colored with any of the two colors. 
	
	Let $S=\{y_1, y_2, y_3, \cdots, y_\ell\}$ and we would like to find $X \subseteq S$ such that the sum of the elements of $X$ is at most $b'_1$ and the sum of the elements of $S\setminus X$ is at most $b'_2$. Observe that, this is exactly the subset sum problem where each $y_i$ is upper bounded by $n$. 
		It is known (see, for example ~\cite{PISINGER19991}) that subset sum problem can be solved in time $\mathcal{O}(nw)$ where $n$ is the number of elements in the given set when each element is bounded by $w$. Applying that algorithm for our problem proves the theorem.
\end{proof}

\subsection{On Paths Graphs}
As the \bcp is \nph on bipartite graphs when $c>2$, we study the problem on a sub-class. Here we study the problem on the class of paths and show that the problem is polynomial time solvable even when $c>2$ in the following lemma.

\begin{lemma}
The problem instance $(G(V,E), \mathcal{C}, \mathcal{A})$ is polynomial time solvable for the family $\mathcal{P}$ of paths.
\end{lemma}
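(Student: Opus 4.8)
The plan is to solve \bcp on a path by dynamic programming along the path, tracking only the information needed to extend a partial coloring: the color of the last vertex and the number of vertices already assigned to each color. First I would fix the path $G = v_1 v_2 \cdots v_n$ and, without loss of generality, assume the budgets $\mathcal{A} = \{b_1,\dots,b_c\}$ are capped at $n$ (any larger budget is effectively unbounded). Define a boolean table $T[i, j, (m_1,\dots,m_c)]$ which is true iff there is a proper coloring of the prefix $v_1\dots v_i$ in which $v_i$ receives color $j$ and exactly $m_t$ vertices of the prefix receive color $t$ for each $t \in [c]$. The recurrence is immediate: $T[i,j,\mathbf{m}]$ is true iff $m_j \ge 1$ and there exists a color $j' \neq j$ with $T[i-1, j', \mathbf{m} - e_j]$ true, where $e_j$ is the $j$-th unit vector; the base case sets $T[1,j,e_j]$ true for every $j$. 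The instance is a YES instance iff some entry $T[n, j, \mathbf{m}]$ with $m_t \le b_t$ for all $t$ is true.

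The obvious concern is the size of this table: the count vector $\mathbf{m}$ ranges over roughly $(n+1)^c$ possibilities, which is not polynomial when $c$ is part of the input. The key observation that rescues polynomiality is that on a path only two colors are ever needed to be ``active'' in a strong sense — more carefully, the relevant structural fact is that a path is bipartite, so we really only need to decide, for the path as a whole, how many vertices go to each color, and the adjacency constraint is very weak. Concretely, I would argue that it suffices to track the multiset of color-usage counts rather than which specific colors are used: by symmetry among colors with equal budgets, and by a greedy exchange argument analogous to Lemma~\ref{lemma:cluster_poly}, one can restrict attention to colorings that use the colors with the largest budgets. Alternatively, and more cleanly, since a path is 2-colorable, at most a bounded number of colors need to appear more than once; the rest appear at most once and can absorb at most one ``surplus'' vertex each, so the problem reduces to a bounded-dimensional counting question plus a matching/counting step, both polynomial.

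The cleanest route, which I would actually carry out, is to reduce directly to the bipartite structure as in Theorem~\ref{bipartitec=2} but allowing $c$ colors. Decompose $G$ into connected components (each a subpath) $P_1,\dots,P_\ell$ with parts $(V_1^i, V_2^i)$, $|V_1^i| \ge |V_2^i|$, so $|V_2^i| \in \{0\}$ only for isolated vertices and otherwise the two part sizes differ by at most $1$ since a subpath on $m$ vertices splits as $\lceil m/2\rceil, \lfloor m/2\rfloor$. Each component, once we pick an ordered pair of colors $(p,q)$ for its two sides, consumes $\lceil m_i/2 \rceil$ from color $p$'s budget and $\lfloor m_i/2 \rfloor$ from color $q$'s. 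So the task is to assign to each component an ordered pair of distinct colors so that the total load on each color does not exceed its budget. This is a bipartite-style assignment/flow problem: build a flow network with a node per component (supplying its vertices), a node per color (with capacity $b_i$), and model the ``split into $\lceil m_i/2\rceil + \lfloor m_i/2\rfloor$ on two distinct colors'' constraint via auxiliary gadgets, then check feasibility of an integral flow of value $n$. I expect the main obstacle to be exactly this modeling step — encoding ``the two sides of a component get \emph{distinct} colors with a prescribed split'' as a polynomially-sized flow or matching instance — and verifying its correctness; once that gadget is in place, polynomial-time solvability follows from any standard max-flow algorithm.
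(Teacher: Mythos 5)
Your final plan --- the route you say you ``would actually carry out'' --- rests on a false structural assumption, so there is a genuine gap. You model each (sub)path on $m_i$ vertices as consuming $\lceil m_i/2\rceil$ of one color and $\lfloor m_i/2\rfloor$ of a second, distinct color, i.e.\ you assume each side of the bipartition of a path is monochromatic. For $c=2$ (as in Theorem~\ref{bipartitec=2}) this is forced, but for general $c$ it is not: a proper budgeted coloring of a path may have to use three or more colors on a single component, and may even reuse one color on both sides at non-adjacent vertices. Concretely, a single path $v_1v_2v_3v_4$ with budgets $(2,1,1)$ is a YES instance (color $v_1,v_3$ with color $1$, $v_2$ with color $2$, $v_4$ with color $3$), but in your model every component needs two colors of budget at least $2$, so your reduction would answer NO on this YES instance. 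Your fallback routes do not close the gap: the DP over full count vectors you yourself note is exponential in $c$; the ``multiset of counts / exchange argument'' paragraph is only a sketch; and the flow gadget encoding ``two sides get distinct colors with a prescribed split'' is precisely the step you leave unconstructed --- and even if built, it would encode the wrong reformulation.

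The missing idea is a simple counting characterization, which is what the paper uses for a single path $P$ on $n$ vertices. Every color class is an independent set, and a maximum independent set of $P$ has size $\lceil n/2\rceil$, so each color can be used at most $\min(b_i,\lceil n/2\rceil)$ times; truncate the budgets accordingly. The claim is that (given at least two colors) the instance is a YES instance if and only if the truncated budgets sum to at least $n$. Necessity is immediate from the truncation, and sufficiency is witnessed by a greedy coloring: sort the colors by non-increasing budget and fill first the odd positions and then the even positions of the path, exhausting each color before moving to the next; a short argument shows that no color ends up on two adjacent vertices. This yields a trivially polynomial-time test plus an explicit coloring, with no per-component color-pair assignment and no flow gadget needed. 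If you want to rescue your write-up, proving that any demand profile with each color at most $\lceil n/2\rceil$ and total $n$ is realizable on a path (e.g.\ by such a greedy or an exchange argument) is exactly the lemma you are missing.
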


\begin{proof}
We are given with a path $P$ of $n$ vertices. A necessary condition is that there are at least two colors. If any color has a budget of more than $\lceil \frac{n}{2} \rceil$, truncate to $\lceil \frac{n}{2} \rceil$
(as we can not use more colors than this). Now we claim that if the total budget (after truncation) is at least $n$, then the path is two colorable and it can not be otherwise. The algorithm in details as follows:

First we sort the colors in descending order based on their budgets and let the sorted list be $\mathcal{C'}$. The content of $\mathcal{C}'$ are as follows: $\mathcal{C}'=\{(i,b_i): \forall i \in [c]\}$ Now we have the following two cases:

\textbf{Case1:} The budget $b_i \leq \ceil*{\frac{n}{2}}, \forall i \in [c]$.

The given instance is a trivial NO instance if total budgets of available color is less than $n$. Now we check for the $\ceil*{\frac{n}{2}}$-th entry in $\mathcal{C'}$. Then we color the odd vertices of the path $P$ with first $\ceil*{\frac{n}{2}}$-th color in $\mathcal{C'}$ and even vertices with $\ceil*{\frac{n}{2}}+1$ to $n$-th color in $\mathcal{C'}$. As $b_i \leq \ceil*{\frac{n}{2}}, \forall i \in [c]$, this will return a proper coloring of $P$.\\

\textbf{Case2:} There exist an $i \in [c]$ such that, the budget $b_i > \ceil*{\frac{n}{2}}$.

Again we can say that the given instance is a trivial NO instance if total budget of available color is less than $n$. If not then start decreasing the availability of such color let $i$, until $b_i \leq \ceil*{\frac{n}{2}}$. Do that for all such color. Now the reduced instance is same as the instance given in Case 1.
\end{proof}

\subsection{On Brooms}
In this section we prove that the \bcp is polynomial time solvable for the class of brooms. Let $G(V,E)$ be a broom such that $V=U\cup W$, $U=\{u_1,u_2, \cdots, u_p\}$, $W=\{w_1,w_2, \cdots, w_q\}$ and $E=\{(w_i,w_{i+1})|~i\in[q-1]\}\cup \{(u_j,w_1)|~j\in[p]\}$. In other words, $W$ induces a path of length $q-1$, $U$ induces an independent set of cardinality $p$ and every vertex in $U$ is adjacent to only $w_1$ in the graph. Let $\I= (G,c,\B)$ be the given problem instance of the \bcp on brooms. Next, we prove that if the given instance is a YES instance then there exist a coloring that uses the minimum budgeted color at $w_1$.
\begin{lemma}\label{lemma:broom_mincolor}
If $\I$ be a YES instance then there is coloring $\psi$, such that in $\psi$, the vertex $w_1$  is colored with minimum budgeted color.
\end{lemma}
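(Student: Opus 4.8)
The plan is an exchange argument. We may assume each $b_i \ge 1$ (a colour of budget $0$ cannot be used, so discard it and decrease $c$), that $p \ge 1$ (otherwise the broom is a path, already handled), and that $c \ge 3$ (for $c=2$ there is a short direct check: a connected broom has exactly two proper $2$-colourings, which are swaps of one another, so if one is budget-respecting so is the other, and one of them colours $w_1$ with a minimum-budget colour). Fix a colour $\beta$ with $b_\beta = \min_i b_i$, let $\varphi$ be a proper budgeted colouring of $G$ (one exists, since $\I$ is a YES instance), and put $\alpha = \varphi(w_1)$. If $b_\alpha = \min_i b_i$ we are done with $\psi = \varphi$, so assume $b_\alpha > b_\beta$, in particular $\alpha \neq \beta$. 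The single structural fact I would exploit is that $N(w_1) \subseteq U \cup \{w_2\}$, so by properness $V_\alpha$ avoids $U \cup \{w_2\}$; thus $V_\alpha = \{w_1\} \cup S$ with $S \subseteq \{w_3, \dots, w_q\}$ an independent set of the path $w_3 - w_4 - \cdots - w_q$. In particular every vertex of $G$ other than $w_1$ has degree at most $2$, hence has at least $c-2 \ge 1$ colours missing from its neighbourhood. Among all proper budgeted colourings I would now fix $\varphi$ so that $|V_\alpha|$ is as small as possible.

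If $|V_\alpha| \le b_\beta$ we are done: exchange the classes $V_\alpha$ and $V_\beta$ (that is, relabel $\alpha \leftrightarrow \beta$). The colouring stays proper, $\psi(w_1) = \beta$, the new $\alpha$-class has size $|V_\beta| \le b_\beta \le b_\alpha$, and the new $\beta$-class has size $|V_\alpha| \le b_\beta$. So assume $|V_\alpha| > b_\beta \ge 1$. Minimality of $|V_\alpha|$ is restrictive here: for each $w_i \in S$, every colour $\gamma \neq \alpha$ with $|V_\gamma| < b_\gamma$ must occur on a neighbour of $w_i$ (otherwise recolouring $w_i$ with $\gamma$ would shrink $V_\alpha$, contradicting minimality). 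Since $w_i$ has at most two neighbours, at most two colours other than $\alpha$ are non-saturated, and each of them appears on a neighbour of every vertex of $S$.

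To finish, I would perform the exchange $V_\alpha \leftrightarrow V_\beta$ regardless; afterwards $\beta$ is the only over-full class, exceeding its budget by exactly $|V_\alpha| - b_\beta$, and the excess can only be removed from $S$ (the vertex $w_1$ must remain coloured $\beta$). Each vertex of $S$ has degree $\le 2$ with both neighbours on the path, hence at least $c-2$ admissible colours; meanwhile colour $\alpha$ now has spare capacity $b_\alpha - |V_\beta| \ge b_\alpha - b_\beta \ge 1$. I would recolour $|V_\alpha| - b_\beta$ vertices of $S$, sending each to an admissible colour with spare capacity; and when every admissible colour is saturated, chase a cascade of recolourings among the vertices of degree $\le 2$ — if a vertex's two path-neighbours rule out every colour but $\beta$, one first recolours a neighbour — that eventually deposits the excess into spare capacity, which exists because $\alpha$ gained room when $w_1$ left it (and, by the structural remark, at most two further colours have slack). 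The global inequality $\sum_i b_i \ge n$ bounds the total deficit and the bounded degree keeps options open along the cascade. The outcome is a proper budgeted colouring $\psi$ with $\psi(w_1) = \beta$. I expect the main obstacle to be precisely this repair: showing the cascade can always be routed so that it terminates, when several classes — possibly including $V_\beta$ — are saturated and $\beta$ already appears on $U$ or on $w_2$; here the minimality of $|V_\alpha|$ together with the fact that a broom has a single vertex of degree exceeding $2$ is what should do the work.
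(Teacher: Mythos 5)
Your opening moves are fine and agree with the paper's first observation: if the class $V_\alpha$ containing $w_1$ satisfies $|V_\alpha|\le b_\beta$, wholesale exchange of the classes $V_\alpha$ and $V_\beta$ is proper and budget-respecting, and you are done. The problem is everything after that. In the remaining case you swap anyway and promise to repair the over-full class by recolouring $|V_\alpha|-b_\beta$ vertices of $S$, falling back on an unspecified ``cascade'' when the admissible colours are saturated; you yourself flag this as the main obstacle, and it is: that repair \emph{is} the lemma. Worse, the constraints you extract from the minimality of $|V_\alpha|$ cut against you rather than for you: they say that every unsaturated colour other than $\alpha$ already appears on a neighbour of \emph{every} vertex of $S$, which is exactly the blocked configuration in which greedy recolouring and short local cascades fail. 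Note also that any successful repair produces a proper budgeted colouring in which $w_1$'s class has size at most $b_\beta<|V_\alpha|$, contradicting your extremal choice; so in this branch the argument must be a genuine reductio showing that the blocked configuration forced by minimality cannot occur, not a constructive ``route the excess somewhere'' procedure. Nothing in the sketch establishes this, and no termination or feasibility argument is given for the cascade. By contrast, the paper's proof handles precisely this difficulty with concrete exchanges: it cases on $|N(w_1)\cap V_1|$, uses a counting argument (from $|V_a|>|V_1|$ it extracts enough vertices of $W\cap V_a$ with no neighbour coloured $1$ to absorb the displaced neighbours of $w_1$), and, when the minimum-budget colour misses $N(w_1)$ entirely, performs a global circular shift of colours along $u_1,w_1,\dots,w_q$. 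Some device of that kind is what your ``cascade'' would have to become, and it is missing.

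A secondary slip: your $c=2$ aside claims that the two proper $2$-colourings of a connected broom are either both budget-respecting or both not; that is false in general (take budgets $1$ and $5$ with bipartition sizes $1$ and $5$). The $c=2$ statement you want is still true, but for a different reason: since $p\ge 1$, the side of the bipartition containing $w_1$ is never the strictly larger side, so if some $2$-colouring respects the budgets, the one giving $w_1$ the minimum budget does too. This is minor next to the main gap above, but as written the justification does not hold.
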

\begin{proof}
Without loss of generality assume that $1$ be the minimum budgeted color in the given instance. Let such that $a$ is used at $w_1$ in a proper coloring $\zeta$. Here we give another proper coloring $\psi$, that uses color $1$ in then vertex $w_1$. We denote $V_i$ as the set of vertices colored with color $i$ and $\zeta(v)$ as the color of $v$ in the coloring $\zeta$. Observe that for any color $a$, if $|V_a|\leq |V_1|$ then, by switching the color in $V_a$ and $V_1$ we can get a new coloring $\psi$, which is a proper coloring.  Now, we prove the lemma in the following two cases.

\begin{description}
\item[Case 1: $|N(w_1)\cap V_1|=1$.] Without loss of generality assume $N(w_1)\cap V_1=\{u_1\}$ in the coloring $\zeta$. In the new coloring $\psi$ switch the color used at $w_1$ and $u_1$, i.e., $\psi(w_1)=1$ and $\psi(u_1)=a$. Observe that $\psi$ is also a proper coloring of $G$ as only one neighbor of $w_1$ is colored with $1$.

\item[Case 2: $|N(w_1)\cap V_1|\geq 2$.] Without loss of generality assume $|N(w_1)\cap V_1|=k,~2\leq k \leq p+1$ in the coloring $\zeta$. As $|V_a|>|V_1|$, there is at least $k$ many vertices in $W\cap V_a$ such that no neighbors of them is colored with color $1$. Let the set be  $W'$. Now, in the new coloring $\psi$, we color $N(w_1)\cap V_1$ with color $a$, $w_1$ and any $k-1$ many vertices from $W'$ with color $1$. Observe that, in both the cases $\psi$ is a proper coloring of $G$ respecting the budgets.

\item[Case 3: $|N(w_1)\cap V_1|=0$.] If the color $1$ is not used in the coloring $\zeta$ of $G$, then color $w_1$ with $1$. Otherwise, pick arbitrarily one vertex from $U$, say $u_1$. Observe that $\zeta(u_1)$ and $\zeta(w_1)$ both can not be the same as $\zeta(w_q)$. If $\zeta(u_1)\neq \zeta(w_q)$ then, perform circular shifting of color starting from $u_1$ to $w_q$. In the new coloring $\psi(u_1)=\zeta(w_1), \psi(w_1)=\zeta(w_2),\cdots, \psi(w_{q-1})=\zeta(w_q), \psi(w_q)=\zeta(u_1)$. Continue the process till $\psi(w_1)=1$. If $\zeta(u_1)= \zeta(w_q) $ and $\zeta(w_1)\neq \zeta(w_q)$ then perform the similar circular shifting starting from $w_1$ to $w_q$ till $\psi(w_1)=1$. Observe that $\psi$ gives us proper coloring of $G$. This completes the proof.
\end{description}

\end{proof}
Now, we describe a coloring scheme using the fact that the minimum budgeted color is used to color the vertex $w_1$. Note that, we use a color at most $\lceil \frac{q}{2} \rceil$ many times to color the vertices in $W$. Color $\min(b_1, \lceil \frac{q}{2} \rceil)$ many odd vertices of the path using the color $1$. Next, pick an arbitrary color $i$ and color the remaining odd vertices in the path until we use that color $\min(b_i, \lfloor \frac{q}{2} \rfloor)$ many times.  If the number of remaining odd vertices in the path is less than $\min(b_i, \lfloor \frac{q}{2} \rfloor)$ then, use the rest budget to color the even vertices in the path. Repeat this process until all the vertices in the path are colored. Observe that any vertex in $U$ is non-adjacent to vertices in $W\setminus \{w_1\}$. Thus any color other than $1$ used in $W$ can be used to color the vertices in $U$ (if we have not used the budget completely). Use the colors (except $1$) with the remaining budgets to color the vertices in $U$. If we have enough budgets return YES, otherwise return NO. Moreover, observe that the entire process can be done in polynomial time. Thus we have the following theorem.

\begin{theorem}
The \bcp is polynomial time solvable on the class of Brooms.
\end{theorem}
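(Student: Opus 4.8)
The plan is to combine Lemma~\ref{lemma:broom_mincolor} with the explicit greedy colouring scheme described just above the theorem statement, and to argue that this scheme is sound, complete, and runs in polynomial time. First I would record the cheap structural observations. A broom is bipartite, so $\chi(G)\le 2$ and feasibility is purely a question of whether the budgets can absorb the forced colour classes; moreover the only path vertex whose colour constrains $U$ is $w_1$, since every $u\in U$ is adjacent to $w_1$ and to nothing else. I would also dispose of degenerate inputs at once (fewer than two colours with positive budget, or total budget below $n$, give trivial NO-instances).

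Second comes the reduction step. By Lemma~\ref{lemma:broom_mincolor}, if $\I=(G,c,\B)$ is a YES-instance then it has a solution in which $w_1$ receives a minimum-budget colour, which we relabel as colour $1$. So it suffices to search among colourings that put colour $1$ on $w_1$. With $w_1$ fixed to colour $1$, colour $1$ can be reused on the path only at the $\lceil q/2\rceil$ odd positions $w_1,w_3,w_5,\dots$ (none of which is adjacent to $w_1$) and can never be used on $U$; every other colour is freely usable on all of $U$.

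Third, I would analyse the greedy scheme itself. It (a) assigns colour $1$ to $\min(b_1,\lceil q/2\rceil)$ odd positions of $W$ starting at $w_1$; (b) sweeps through the remaining path vertices colour by colour, each colour $i\ne 1$ used at most $\min(b_i,\lfloor q/2\rfloor)$ times, always exhausting the remaining odd positions before touching even ones, so no colour class ever occupies two adjacent path vertices; (c) pours each colour's leftover budget onto $U$, and returns YES iff all of $U$ is covered. Soundness — that the output is a proper budgeted colouring — is a direct check from the order in which vertices are coloured. Completeness is the heart of the matter: whenever \emph{any} proper budgeted colouring exists, the canonical colouring built by the scheme must also succeed. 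I would prove this by an exchange argument in the spirit of Lemma~\ref{lemma:cluster_poly}: starting from an arbitrary solution with $w_1$ coloured $1$, (i) move as much of $b_1$ as possible onto odd path positions, which can only free budget of the other colours; (ii) compress each remaining colour class on the path toward the canonical greedy pattern without introducing conflicts; and (iii) note that (i)–(ii) can only decrease the amount of non-$1$ budget spent on $W$, hence only increase the budget left for $U$, so the greedy colouring covers $U$ whenever the original one did.

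The main obstacle I expect is steps (ii)–(iii) of the completeness argument. Because of the parity structure of the path, a single colour may legitimately appear on both odd and even positions, and near $w_1$ the path colouring is coupled to the colouring of $U$; so the exchange moves have to be designed so that each one strictly advances toward the canonical form while preserving both properness and every budget bound $b_i$ (this is exactly why the cap $\lfloor q/2\rfloor$, the number of even positions, appears in the scheme). Once this canonical-form lemma is established, polynomial time is immediate: the scheme sorts the $c$ colours by budget and then makes one linear sweep over $W$ and one over $U$, so it runs in $\mathcal{O}(n\log n)$ time overall, which yields the theorem.
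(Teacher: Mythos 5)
Your proposal follows the paper's own route: it invokes Lemma~\ref{lemma:broom_mincolor} to place the minimum-budget colour on $w_1$ and then runs exactly the greedy scheme described before the theorem, using the observation that every non-minimum colour spent on $W$ remains usable on $U$, so correctness reduces to showing the greedy wastes no non-minimum budget on the path. The exchange/canonical-form step you flag as the main remaining obstacle is precisely the part the paper itself leaves informal (it only describes the scheme and asserts that returning YES iff the leftover budgets cover $U$ is correct), so your plan matches the published proof in both structure and level of detail.
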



\section{Structural Parameterization of \bcp} \label{Sec:FPT}

In this section, we address the parameterized complexity of \bcp with respect to some structural parameters, parameters that measure the (deletion) distance to a tractable graph class.  

\subsection{\bcp parameterized by Cluster Vertex Deletion Number (CVD)}\label{sec:param_CVD}
Recall from Section~\ref{sec:cluster} that \bcp is polynomial time solvable on cluster graphs. So, here we ask what if the input graph is $k$-vertices away from a cluster graph i.e., deleting $k$ vertices from the graph makes the graphs a cluster graph? We show that \bcp is \fpt parameterized by $k$ and an additional parameter which is the number of colors or the number of clusters in the resulting graph. 
We remark that we do not need to assume that we are given the deletion set $S$, as we can find it in $\mathcal{O}^*(1.9102^k)$ time~\cite{boral2016fast}. 

Consider any instance of $\I=(G,c,\B)$ of \bcp. Let $S$ be the cluster vertex deletion set of $G$ and $\Pe=\{P_1,P_2, \cdots, P_{\ell}\}$ be a partition of $S$. Let $\alpha$ be any proper coloring of $S$ such that for each part $P_i$ and any two vertices $u,v \in P_i$, $\alpha(u)=\alpha(v)$. Now, we define a new instance $\I'= (G,c, \B, \Pe, \alpha)$ of   {\sc Extended Budgeted Coloring Problem}  (\ebcp) as follows. Given $\I'$, the \ebcp asks whether there exists a proper budgeted coloring $\beta$ of $G$ such that $\forall v \in S, ~\beta(v)=\alpha(v)$. Without loss of generality assume that, in $\alpha$ for each $i,~P_i$ is colored with color $i$. Though the following lemma follows form Lemma 1 and Lemma 2 in ~\cite{gomes2019structural}, we give a slightly different proof just for completeness.

\begin{lemma}\label{lemma:cvd+colorscorrectness}
	Given a partition $\Pe$ and its coloring $\alpha$ of cluster vertex deletion set $S$, \ebcp instance $\I'=(G,c, \B, \Pe, \alpha)$ can be solved in polynomial time. 
\end{lemma}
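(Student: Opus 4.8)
We are given a cluster vertex deletion set $S$ of $G$, a partition $\Pe = \{P_1, \ldots, P_\ell\}$ of $S$ with a fixed proper coloring $\alpha$ in which part $P_i$ receives color $i$, and we must decide whether $\alpha$ extends to a proper budgeted coloring of $G$. The plan is to reduce this to a flow (or bipartite $b$-matching / assignment) problem. First I would process the fixed part: since $\alpha$ is already committed on $S$, compute for each color $i$ the residual budget $b_i' = b_i - |\{v \in S : \alpha(v) = i\}|$; if any $b_i' < 0$, reject. Also, for each cluster $\K_j$ of $G - S$, the colors that are ``forbidden'' at $\K_j$ are exactly those $i$ such that some vertex of $P_i$ has a neighbor in $\K_j$ (more precisely, one should be careful vertex-by-vertex: a specific vertex $v \in \K_j$ cannot take color $i$ if $v$ has a neighbor in $S$ colored $i$). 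So each vertex $v$ of $G-S$ has a list $L(v) \subseteq [c]$ of allowed colors.

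**The flow network.** Now the remaining task is: assign to every vertex $v$ of the cluster graph $G - S$ a color from $L(v)$ so that within each cluster all colors are distinct, and globally color $i$ is used at most $b_i'$ times. Here I would invoke Lemma~\ref{lemma:cluster_poly}-style reasoning, or more directly build a network: a source $s$, a node for each cluster $\K_j$ with an arc $s \to \K_j$ of capacity $|V(\K_j)|$, a node for each color $i$ with an arc (color $i$) $\to t$ of capacity $b_i'$, and an arc $\K_j \to i$ of capacity $1$ whenever at least one vertex of $\K_j$ allows color $i$. A small subtlety is the lists $L(v)$: if different vertices in the same cluster have different allowed sets, the clean ``capacity 1 per cluster-color pair'' network is not quite exact, so I would instead model each cluster as a small bipartite matching gadget (vertices of $\K_j$ on one side, colors on the other, edges given by the lists) nested inside the global flow — equivalently, a single integral flow problem with the cluster-internal matchings enforced. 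A maximum integral flow of value $|V(G) \setminus S| = |V(G)| - |S|$ corresponds exactly to a valid extension; integrality of max flow gives the assignment back. This runs in polynomial time.

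**Correctness and the main obstacle.** Correctness is the routine direction: a proper budgeted coloring extending $\alpha$ yields a feasible flow by sending one unit along $s \to \K_j \to \varphi(v) \to t$ for each $v$, and conversely an integral flow of the required value decomposes into such paths, and the within-cluster matching constraint guarantees distinct colors inside each clique while the capacities $b_i'$ enforce the global budgets; together with $\alpha$ on $S$ this is a proper budgeted coloring since the lists $L(v)$ already encode non-adjacency to $S$. I expect the main obstacle to be precisely the bookkeeping around the per-vertex color lists: one must make sure the flow model faithfully forbids, for each individual vertex, the colors appearing on its $S$-neighbors, rather than only forbidding colors at the cluster level, and that the cluster gadget still admits an exact max-flow characterization. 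Handling this cleanly — either by a per-cluster bipartite matching subroutine feeding a global flow, or by the greedy exchange argument in the spirit of Lemma~\ref{lemma:cluster_poly} applied to the residual budgets — is the crux; everything else (the degree-$1$ source/sink capacities, the polynomial running time of max flow) is standard.
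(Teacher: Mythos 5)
Your proposal is correct and follows essentially the same route as the paper's own proof: the paper builds exactly the network you describe, with a unit-capacity node $u_{ij}$ for each (color/part, cluster) pair realizing your ``cluster-internal matching'' gadget, edges $u_{ij}\to w$ for $w\in \K_j$ only when $w$ has no neighbor of color $i$ in $S$ (your per-vertex lists $L(v)$), source arcs of capacity $b_i-|P_i|$ (your residual budgets), and feasibility characterized by an integral maximum flow of value $n-k$. The subtlety you flag about per-vertex lists is resolved precisely by that per-(color, cluster) capacity-one bottleneck, so your plan needs no further repair.
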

\begin{proof}
	Let $\K=\K_1 \cup \K_2 \cup \K_3 \cup \cdots \cup \K_m$ be the clusters in $G-S$. Now, we construct a flow network as follows.
	For each part $P_i$ create one vertex $v_i$ and for each unused color $a$ in the coloring $\alpha$ create a vertex $v_{a+\ell}$, where $\ell$ is the number of parts in $\Pe$. 
	For each $(v_i,\K_j)$ pair, add a vertex $u_{ij}$ (see Figure~\ref{fig:CVD+Col}). Join each $v_i$ and $u_{ij}$ for all $j \in [m]$. 
	Connect each $u_{ij},~i\in [\ell]$ and $w\in \K_j$ by an edge, if $P_i\cup \{w\}$ is independent in $G$. Also connect each $u_{ij}$ and $w\in \K_j$ by an edge, if $~i> [\ell]$. Now, we add two more vertices $s$ and $t$ such that, every vertex $v_i$ is adjacent to $s$ and every vertex in $\K$ is adjacent to $t$. Next, we set the capacities of the edges. Capacity of every $(s, v_i)$ edge is $b'_i= b_i - |P_i|$ if $i \in [\ell]$ and is $b'_i= b_i$, otherwise. All other edges have capacity one. Next, we prove the following claim.
	
	\begin{figure}[!htb]
		\centering
		\includegraphics[width= 6 cm]{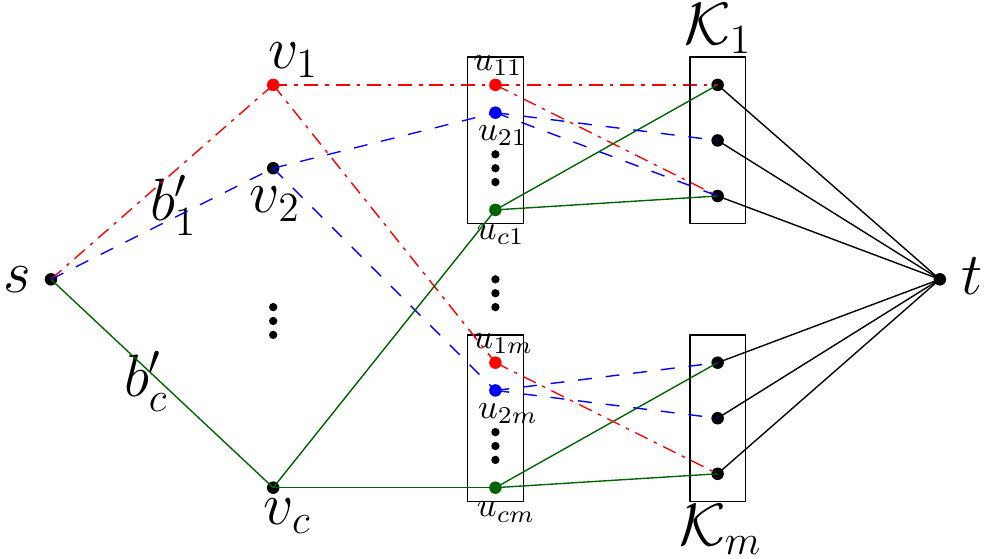}
		\caption{Construction of the flow instance for \bcp parameterized by CVD and the number of colors }
		\label{fig:CVD+Col}
	\end{figure}
	
	\begin{claim}
		\emph{ There exists a solution (proper budgeted coloring $\beta$) for the \ebcp instance $\I'$ if and only if the constructed flow network has an $(s,t)$-flow of value $n-k$, where $n$ is the number of vertices in $G$.}
	\end{claim}
	
	\begin{proof}
		Let us first assume that there exists a solution to the \ebcp instance $\I'$. Let $\beta$ be the coloring of $G$ such that, $\forall u \in S,~\beta(u)=\alpha(u)$.  Now we fix the flow in the flow network. Let us denote for the coloring $\beta$, for each $i, |V_i\cap \K|=x_i$. Let for an edge $(u,v), ~ f(u,v)$ denotes the flow of the edge. Set $f(s,v_i)=x_i$ for all $i$. If there exists a vertex $y\in V_i \cap \K_j$ then set $f(v_i,u_{ij})=1$, $f(u_{ij},y)=1$ and $0$ otherwise. As $\beta$ is a proper budgeted coloring of $G$, every vertex in $\K$ gets a color. Thus in this way for each vertex $w\in \K_j,~\forall j$ can make find one $i$ such that $f(u_{ij},w)=1$. Therefore $\sum_{w\in \K} f(w,t)=n-k$. As $\beta$ is a proper budgeted coloring, there does not exist any $\K_j,~j\in [m]$ that has two vertices $w,x$ such that $\beta(w)=\beta(x)$. Hence in the flow network also there are no $i\in [c]$ and $j\in [m]$, such that $f(u_{ij},w)=f(u_{ij},x)=1$, where $w,x \in \K_j$. Also note that  for each $i$, $x_i\leq b'_i$ and $f(s, v_i)=\sum_{j\in [m]}f(v_i,u_{ij})$. Therefore the capacity and the conservation constraints are satisfied. Hence we have a valid $(s,t)$-flow of value $n-k$.

		Now, we assume that the flow network has a $(s,t)$-flow of value $n-k$. That implies for all $w\in \K,~ f(w,t)=1$. For each vertex $w\in \K_j$ for all $j\in [m]$, find the $i$ such that $f(u_{ij},w)=1$. Color $w$ with color $i$. Now we have a coloring of the graph $G$.
		Next, we show that the coloring is a proper coloring of $G$. Let $w\in \K_j$ be a vertex that gets color $i$. We prove that, $w$ is neither adjacent to a vertex in $P_i$, nor another vertex in $\K_j$ gets the color $i$.  
		Observe that there is a path between $v_i$ and a vertex $x\in \K$ only if either $P_i \cup \{x\}$ is independent in $G$ or the color $i$ is not used at $S$. This satisfies the first condition. Furthermore, as we have a valid flow, due to the conservation constraint we know for a vertex $u_{ij}$, $\sum_{w\in \K_j} f(u_{ij},w)=1$. Thus no two vertices from the same clique get the same color in our coloring scheme, i.e., second condition is also satisfied. Again due to the capacity and conservation constraints, for each $i$ we have $f(s,v_i)=\sum_{j\in [m]}f(v_i,u_{ij})$. Therefore the budget constraints of \bcp are also satisfied. This completes the proof.
	\end{proof}   
	Moreover, observe that we can solve the flow instance in polynomial time. 
\end{proof}


Let $\Gamma$ be the number of colorings of $S$. Observe that $|\Gamma|=c^k$. If $(G,c,\B)$ is a YES instance of \bcp then there must exist one coloring $\alpha \in \Gamma$ such that we have a solution to the \ebcp with respect to $\alpha$. 
Thus from Lemma \ref{lemma:cvd+colorscorrectness} we have the following theorem.
\begin{theorem}\label{th:cvd+color}
	\bcp parameterized by cluster vertex deletion number and the number of colors can be solved in time $\mathcal{O}^*(c^k)$, where $c$ is the number of colors and $k$ is the cardinality of the cluster vertex deletion set.
\end{theorem}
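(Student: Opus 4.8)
The plan is to reduce the \bcp instance to polynomially many instances of \ebcp — one for each candidate coloring of the cluster vertex deletion set $S$ — and dispatch each via Lemma~\ref{lemma:cvd+colorscorrectness}. First I would compute a cluster vertex deletion set $S$ with $|S|\le k$ using the algorithm of Boral et al.~\cite{boral2016fast} in time $\mathcal{O}^*(1.9102^k)$; beyond its size, nothing else about $S$ is needed.

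Then I would enumerate all $c^{k}$ functions $\alpha\colon S\to[c]$. Each $\alpha$ is discarded unless it is a proper coloring of $G[S]$ and respects the budgets on $S$ (for every color $i$, at most $b_i$ vertices of $S$ carry color $i$); both checks are polynomial. For a surviving $\alpha$, take $\Pe=\{P_1,\dots,P_\ell\}$ to be the partition of $S$ into the nonempty color classes of $\alpha$ — monochromatic parts, distinct colors on distinct parts — which is exactly the shape the definition of \ebcp demands. I would run the polynomial-time flow algorithm of Lemma~\ref{lemma:cvd+colorscorrectness} on $\I'=(G,c,\B,\Pe,\alpha)$ and output YES iff some $\alpha$ produces a YES.

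Correctness is short. If $\varphi$ is a proper budgeted coloring of $G$, then $\alpha:=\varphi|_S$ is among the enumerated colorings, survives the filters, and $\varphi$ itself certifies that the associated \ebcp instance is a YES instance (it agrees with $\alpha$ on $S$). Conversely, any YES returned by Lemma~\ref{lemma:cvd+colorscorrectness} comes equipped with a proper budgeted coloring $\beta$ of $G$, a solution to the original instance. The running time is $\mathcal{O}^*(1.9102^k)$ for finding $S$ plus $c^k$ iterations of polynomial-time work, i.e.\ $\mathcal{O}^*(1.9102^k+c^k)=\mathcal{O}^*(c^k)$ for $c\ge 2$ (and $c\le 1$ is trivial, since then $G$ must be edgeless).

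The heavy lifting is already in Lemma~\ref{lemma:cvd+colorscorrectness}; the only point that needs care here is that one does not have to enumerate partitions of $S$ in addition to colorings. Any proper budgeted coloring of $G$ restricts on $S$ to some coloring $\alpha$, and that $\alpha$ canonically names the partition $\Pe$ (its color classes), so sweeping over the $c^k$ colorings of $S$ is guaranteed to hit a solution whenever one exists. I expect this bookkeeping observation, rather than any genuine technical difficulty, to be the subtle step.
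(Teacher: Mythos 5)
Your proposal is correct and follows essentially the same route as the paper: enumerate all $c^k$ colorings of the cluster vertex deletion set $S$ (found via Boral et al.), treat each surviving coloring's color classes as the partition for an \ebcp instance, and solve each such instance in polynomial time by the flow construction of Lemma~\ref{lemma:cvd+colorscorrectness}. Your explicit filtering of improper or budget-violating colorings of $S$ and the handling of $c\le 1$ are minor bookkeeping refinements of the same argument.
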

Next, we prove that \bcp is \fpt parameterization by CVD and the number of clusters in $G \setminus S$. Observe that the number of clusters can be significantly smaller than the number of colors, for example when $G \setminus S$ is a large clique.
\subsubsection{Parameterization by CVD and the number of clusters}
\label{sec:cvd+clusters}


Let $S$ be the cluster vertex deletion set and $d$ be the number of clusters in $G - S$.
Let $\mathcal{P}=\{P_1, P_2, \cdots, P_{\ell}\}$ be any partition of $S$ into $\ell$ independent sets, and $D=\{d_1,d_2, \cdots, d_{\ell}\}$ be any ordered list of integers such that $0\leq d_i \leq d$ for each $i\in [\ell]$. Without loss of generality assume $|P_1|\geq |P_2| \geq \cdots \geq |P_{\ell}|$. For any $\Pe$, $D$  we define $\rgamma_{\Pe D}$ as follows. $\rgamma_{\Pe D}(P_1)$ be the least budgeted color with budget at least $|P_1|+d_1$. We inductively define $\rgamma_{\Pe D}(P_i)$ to be the the least budgeted color $a$ with budget at least $|P_i|+d_i$ such that for all $j<i$, $\rgamma_{\Pe D}(P_j)\neq a$.
Observe that each color is assigned to at most one part in a partition. We modify $\B$ and create a new list of budgets $\B_{\Pe D}=\{b'_1,b'_2,\cdots, b'_{\ell}\}$ as follows. $b'_a=|P_i|+d_i$ , if  there exists a $P_i$ such that $ \rgamma_{\Pe D}(P_i)= a $ and $b'_a=b_a$, otherwise.
Let $\mathcal{U}$ be the set of all pairs $(\Pe,D)$, where $\Pe$ be the all possible partition of $S$ into independent sets and  $D\in \{0,1,2, \cdots, d\}^{|\Pe|}$. 
If there is one $P_i$, such that $\rgamma_{\Pe D}(P_i)$ does not exist, discard the pair $(\Pe,D)$. Suppose we know the partition of $S$ induced by a feasible coloring; for each part in the partition we also know the number of vertices in $V\setminus S$ that get the same color. Given these two informations we can design a greedy algorithm to  solve the problem in polynomial time using Lemma~\ref{lemma:cvd+colorscorrectness}.  Towards that we prove the following lemma.

\begin{lemma}
	$\I=(G,c, \B)$ is a YES instance if and only if there exists at least one $(\Pe^*,D^*)\in \mathcal{U}$ such that $\I_{\Pe^* D^*}=(G,c,\B_{\Pe^* D^*}, \Pe^*, \rgamma_{\Pe^* D^*})$ is a YES instance. 
\end{lemma}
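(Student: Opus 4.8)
The plan is to establish the two directions of the equivalence by relating a feasible budgeted coloring of $G$ to a well-chosen pair $(\Pe^*, D^*) \in \mathcal{U}$ and an application of Lemma~\ref{lemma:cvd+colorscorrectness} (via the \ebcp formulation). For the forward direction, suppose $\I = (G, c, \B)$ is a YES instance with a proper budgeted coloring $\varphi$. First I would read off from $\varphi$ the partition $\Pe^*$ of $S$: two vertices of $S$ go in the same part exactly when they receive the same color under $\varphi$. Each part is an independent set since $\varphi$ is proper, so $\Pe^* $ is a legitimate partition into independent sets. For each part $P_i^*$, let $d_i^*$ be the number of vertices of $V \setminus S$ that $\varphi$ colors with the color assigned to $P_i^*$; since every such vertex lies in one of the $d$ clusters of $G - S$ and distinct vertices in a cluster get distinct colors, we have $0 \le d_i^* \le d$, so $D^* = (d_1^*, \dots, d_{\ell}^*)$ is a valid tuple and $(\Pe^*, D^*) \in \mathcal{U}$.

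The key point in the forward direction is that $\varphi$ itself (restricted to $S$ in the obvious way, and with the induced color-to-part assignment reconciled with $\rgamma_{\Pe^* D^*}$) witnesses that $\I_{\Pe^* D^*}$ is a YES instance. Here one must be slightly careful: $\rgamma_{\Pe^* D^*}$ assigns to each part the \emph{least} budgeted color whose budget is at least $|P_i^*| + d_i^*$ (and not already used), which may differ from the color $\varphi$ actually used. So I would argue by an exchange argument, analogous to the one in Lemma~\ref{lemma:cluster_poly}: if $\varphi$ uses color $a$ on part $P_i^*$ but $\rgamma_{\Pe^* D^*}(P_i^*) = a' \ne a$ with $b_{a'} \le b_a$, then since the total number of vertices colored $a$ by $\varphi$ (namely $|P_i^*| + d_i^*$, as the clusters each contribute at most one) is at most $b_{a'}$, we can swap colors $a$ and $a'$ globally and still have a proper budgeted coloring — now with respect to the modified budgets $\B_{\Pe^* D^*}$, since $b'_{a'} = |P_i^*| + d_i^*$ exactly accommodates these vertices, and the vertices previously colored $a'$ (at most $b_{a'} \le b_a$ of them) fit under $b_a$. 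Iterating over parts in increasing order of $|P_i^*|$ (matching the sorted convention) yields a coloring that agrees with $\rgamma_{\Pe^* D^*}$ on $S$ and respects $\B_{\Pe^* D^*}$, i.e.\ a solution to the \ebcp instance $(G, c, \B_{\Pe^* D^*}, \Pe^*, \rgamma_{\Pe^* D^*})$, which is exactly what $\I_{\Pe^* D^*}$ being a YES instance means.

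For the reverse direction, suppose $(\Pe^*, D^*) \in \mathcal{U}$ with $\I_{\Pe^* D^*}$ a YES instance; by Lemma~\ref{lemma:cvd+colorscorrectness} applied to the \ebcp instance, there is a proper coloring $\beta$ of $G$ with $\beta|_S = \rgamma_{\Pe^* D^*}$ and $|V_a(\beta)| \le b'_a$ for all $a$. I must check this is a proper budgeted coloring of $\I$ with the \emph{original} budgets $\B$. Properness is immediate. For the budget bound: for colors $a$ with $b'_a = b_a$ there is nothing to prove; for a color $a = \rgamma_{\Pe^* D^*}(P_i^*)$, we have $b'_a = |P_i^*| + d_i^* \le b_a$ by the very definition of $\rgamma_{\Pe^* D^*}$ (it picks a color with budget at least $|P_i^*| + d_i^*$), so $|V_a(\beta)| \le b'_a \le b_a$. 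Hence $\beta$ is a proper budgeted coloring of $G$ and $\I$ is a YES instance.

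The main obstacle I anticipate is the exchange argument in the forward direction — in particular verifying that after globally swapping a pair of colors the coloring still respects the \emph{modified} budget list $\B_{\Pe^* D^*}$ and that processing the parts in the sorted order $|P_1^*| \ge \cdots \ge |P_\ell^*|$ does not create conflicts (i.e.\ no part ends up wanting a color already claimed by an earlier, larger part). This requires using the monotonicity built into the definition of $\rgamma_{\Pe D}$ (it assigns colors to larger parts first and only picks colors not used by earlier parts), mirroring the structure exploited for cluster graphs in Lemma~\ref{lemma:cluster_poly}; the remaining bookkeeping is routine.
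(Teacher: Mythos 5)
Your proposal is correct and follows essentially the same route as the paper: in the forward direction you read off the partition $\Pe^*$ and counts $D^*$ from a hypothetical budgeted coloring and then use an exchange argument (swapping a used color with the smaller-budget color prescribed by $\rgamma_{\Pe^* D^*}$, iterating over parts from largest to smallest) to align the coloring with $\rgamma_{\Pe^* D^*}$, and the reverse direction is immediate since each modified budget $b'_a$ is at most the original $b_a$. The only cosmetic issue is the slip ``increasing order of $|P_i^*|$'' early on, which your later statement (processing in the order $|P_1^*|\ge\cdots\ge|P_\ell^*|$) corrects, matching the paper's choice of the largest disagreeing part at each step.
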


\begin{proof}
	Let us assume $\I$ is a YES instance of \bcp i.e., there is a proper budgeted coloring $\beta$ of $G$. Without loss of generality, assume that we have used the first $\ell$ colors to color the vertices of $S$. 
	Recall, $V_i=\{v|~\beta(v)=i\}$. We define $\Pe'=\{P'_1,P'_2,\cdots, P'_{\ell}\}$, where $P'_i=V_i\cap S$. Define $D'=\{d'_1,d'_2, d'_3, \cdots d'_{\ell}\}$ such that $d'_i=|V_i|-|P'_i|$ and $\B'=\{b'_1,b'_2, b'_3, \cdots, b'_c\}$, where  $b'_i= |V_i|$, if $i\in [\ell]$ and $b'_i=b_i$, otherwise.  Observe that, $(\Pe', D')\in \mathcal{U}$. As $\I$ is a YES instance, $\I'=(G,c,\B', \Pe', \beta)$ is also a YES instance of \ebcp.  Next we show that $\I_{\Pe' D'}=(G,c,\B', \Pe', \rgamma_{\Pe' D'})$ is a YES instance.
	
	Observe that if for all $1\leq j\leq \ell$, $\rgamma_{\Pe' D'}(P'_j) =\beta(P'_j)$  then nothing to prove. Let $P'_j$ be the largest part in $\Pe'$ such that $\rgamma_{\Pe' D'}(P_j) \neq\beta(P'_j)$. Let $\rgamma_{\Pe' D'}(P'_j)=x$ and $\beta(P'_j)=y$. By construction $b_x\leq b_y$. We can exchange colors $x$ and $y$ to construct a new coloring $\beta_1$ from $\beta$ which is also a feasible coloring for $\I'$. In $\beta_1$ one more part of $S$ receives same color as $\rgamma_{\Pe' D'}$. Thus applying the same step at most $\ell$ times we can create a feasible coloring $\beta_{\ell}$ such that for all $1\leq j\leq \ell$, $\rgamma_{\Pe' D'}(P'_j) =\beta_{\ell}(P'_j)$.  Thus proving $\I_{\Pe' D'}$ is a YES instance.

	While proving the converse, it is not hard to follow that if $\I_{\Pe, D,}$ is a YES instance then $\I$ is also a YES instance.
\end{proof}

Observe that, as there can be $k^k$ many partitions of $S$ and $d^k$ many choices of $D$. Thus $|\mathcal{U}|=\mathcal{O}(d^k \cdot 2^{k \log k})$. For each pair of $(\Pe,D)$ using Lemma~\ref{lemma:cvd+colorscorrectness}, we  can solve the problem in polynomial time. Therefore we have the following theorem.

\begin{theorem}\label{th:CVD+cluster}
	\bcp parameterized by $k$, the cluster vertex deletion number and $d$, the number of clusters can be solved in time $\mathcal{O}^*(d^k\cdot 2^{k\log k})$. 
\end{theorem}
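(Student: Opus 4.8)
The plan is to turn the characterisation established just above into an algorithm: $\I=(G,c,\B)$ is a YES instance of \bcp if and only if some pair $(\Pe^*,D^*)\in\mathcal{U}$ makes $\I_{\Pe^* D^*}=(G,c,\B_{\Pe^* D^*},\Pe^*,\rgamma_{\Pe^* D^*})$ a YES instance of \ebcp, and each such \ebcp instance is decidable in polynomial time by the flow construction of Lemma~\ref{lemma:cvd+colorscorrectness}. So the algorithm is simply: enumerate $\mathcal{U}$, run the flow procedure on the corresponding \ebcp instance for each pair, and return YES if and only if at least one call succeeds.

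Concretely, I would first compute a cluster vertex deletion set $S$ with $|S|\le k$ using the algorithm of Boral et al.~\cite{boral2016fast} in time $\mathcal{O}^*(1.9102^k)$, so that $S$ need not be part of the input; let $d$ be the number of clusters of $G-S$. Then I would enumerate, with polynomial delay, all set partitions $\Pe=\{P_1,\dots,P_\ell\}$ of $S$, keeping only those in which every $P_i$ is independent in $G$ (sorted so that $|P_1|\ge\cdots\ge|P_\ell|$), and for each such $\Pe$ all tuples $D\in\{0,1,\dots,d\}^{\ell}$. For a pair $(\Pe,D)$ I would compute the greedy coloring $\rgamma_{\Pe D}$ of $S$ and the modified budget list $\B_{\Pe D}$ as defined above, discard the pair if $\rgamma_{\Pe D}$ is undefined on some part, and otherwise invoke Lemma~\ref{lemma:cvd+colorscorrectness} on $\I_{\Pe D}$. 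Each step per pair takes polynomial time, and correctness is immediate from the preceding lemma.

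For the time bound I would count $|\mathcal{U}|$: the number of partitions of a $k$-element set is at most the Bell number, which is bounded by $k^k=2^{k\log k}$, and a partition with $\ell\le k$ parts contributes at most $(d+1)^{\ell}\le (d+1)^{k}$ tuples $D$; hence $|\mathcal{U}|=\mathcal{O}^*(d^k\cdot 2^{k\log k})$. Multiplying by the polynomial cost per pair and adding the $\mathcal{O}^*(1.9102^k)$ term for computing $S$ (which is dominated, since $1.9102^k<2^{k\log k}$ for $k\ge 2$) yields the claimed running time $\mathcal{O}^*(d^k\cdot 2^{k\log k})$.

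Given how much machinery is already in place, the only genuinely substantive point is why it suffices to let the overflow values $d_i$ range over $\{0,1,\dots,d\}$ rather than $\{0,1,\dots,n\}$ — without that restriction the enumeration would cost $n^k$ and the parameterization would be pointless. The justification, which I would make explicit, is that $G-S$ is a disjoint union of $d$ cliques, so any colour class of a proper colouring of $G$ meets $V\setminus S$ in at most one vertex per clique, hence in at most $d$ vertices; thus the number of vertices of $V\setminus S$ receiving the same colour as a part $P_i$ is at most $d$. This is exactly what places the tuple $(\Pe',D')$ extracted from a hypothetical solution inside $\mathcal{U}$, and it is the reason coupling the CVD number with the number of clusters gives an \fpt algorithm.
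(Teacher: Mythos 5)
Your proposal is correct and follows essentially the same route as the paper: enumerate all pairs $(\Pe,D)\in\mathcal{U}$, use the characterization lemma preceding the theorem for correctness, and decide each resulting \ebcp instance in polynomial time via the flow construction of Lemma~\ref{lemma:cvd+colorscorrectness}, giving $|\mathcal{U}|=\mathcal{O}^*(d^k\cdot 2^{k\log k})$ total calls. Your explicit remark that each colour class meets $V\setminus S$ in at most one vertex per clique (hence $d_i\le d$) is exactly the observation the paper leaves implicit when asserting $(\Pe',D')\in\mathcal{U}$, so it is a welcome clarification rather than a deviation.
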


\subsection{ \bcp Parameterized by the Distance to Clique}

First, observe that when the number of clusters is one, the problem reduces to the \bcp parameterized by the distance to a clique. Thus from Theorem~\ref{th:CVD+cluster} we get the following theorem as a corollary.
\begin{theorem}
	\bcp parameterized by the distance to clique can be solved in time $\mathcal{O}^{*}(2^{k\log k})$, where $k$  is the size of clique modulator.
\end{theorem}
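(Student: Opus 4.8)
The plan is to derive this directly as a special case of Theorem~\ref{th:CVD+cluster}. Observe that if $A \subseteq V(G)$ with $|A| = k$ is such that $G[V \setminus A]$ is a clique, then $A$ is in particular a cluster vertex deletion set of $G$ of size $k$: deleting $A$ leaves a single clique, which is trivially a cluster graph (a collection of one clique). Hence $G$ has CVD number at most $k$, and the number of clusters $d$ in $G - A$ is exactly $1$. We may also recall from the preliminaries that such a modulator $A$ can be found in $\mathcal{O}^*(1.2738^k)$ time, so we need not assume it is given as part of the input.

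With $d = 1$, plugging into the running time $\mathcal{O}^*(d^k \cdot 2^{k \log k})$ of Theorem~\ref{th:CVD+cluster} gives $\mathcal{O}^*(1^k \cdot 2^{k \log k}) = \mathcal{O}^*(2^{k \log k})$, which is exactly the claimed bound. The only point to be careful about is that the CVD number of $G$ could conceivably be smaller than $k$ (if $G$ is closer to a cluster graph than to a clique), but that only helps: we simply run the algorithm of Theorem~\ref{th:CVD+cluster} with the clique modulator $A$ itself as the deletion set, which is a valid (if not necessarily minimum) cluster vertex deletion set of size $k$ and yields $d = 1$. The correctness lemma underlying Theorem~\ref{th:CVD+cluster} (and ultimately Lemma~\ref{lemma:cvd+colorscorrectness}) does not require the deletion set to be minimum, only that $G$ minus the set is a cluster graph, so this substitution is legitimate.

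I do not expect any real obstacle here — the statement is explicitly flagged in the text as ``a corollary'' of Theorem~\ref{th:CVD+cluster}. The only thing worth spelling out in the write-up is the two observations above: that a clique modulator is a cluster vertex deletion set, and that it induces a single cluster, so that $d=1$ collapses the $d^k$ factor. Everything else (the enumeration over partitions of the modulator and over the vectors $D \in \{0,1\}^{|\mathcal P|}$, and the polynomial-time flow subroutine from Lemma~\ref{lemma:cvd+colorscorrectness}) is inherited verbatim.
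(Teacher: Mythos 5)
Your proposal is correct and matches the paper's own argument: the paper obtains this theorem precisely as a corollary of Theorem~\ref{th:CVD+cluster} by observing that a clique modulator is a cluster vertex deletion set with exactly one cluster, so $d=1$ collapses the running time to $\mathcal{O}^{*}(2^{k\log k})$. Your additional remarks (finding the modulator in FPT time, and that the deletion set need not be minimum) are harmless elaborations of the same route.
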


Next, we prove that there is no polynomial kernel for \bcp parameterized by the distance to clique under standard complexity theoretic assumptions. We show a parameter preserving reduction from the clique problem (refer to Section~\ref{sec:not} for a formal definition) parameterized by vertex cover size.  
It is known that the clique problem parameterized by vertex cover does not admit a polynomial kernel unless \textsf{NP $\subseteq$ coNP/poly}~\cite{bodlander_cross}. 

\begin{theorem}\label{nopoly}
	\bcp parameterized by the distance to clique does not admit a polynomial kernel unless \textsf{NP $\subseteq$ coNP/poly}.
\end{theorem}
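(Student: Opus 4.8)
The plan is to exhibit a polynomial-parameter transformation (a PPT) from the \textsc{Clique Problem} parameterized by vertex cover size to \bcp parameterized by the distance to clique; since the former admits no polynomial kernel unless \textsf{NP $\subseteq$ coNP/poly}~\cite{bodlander_cross} and PPTs transfer kernel lower bounds, the theorem follows. Given a \textsc{Clique} instance $(H,k)$ together with a vertex cover $Q$ of $H$ of size $q$, I want to build a \bcp instance $(G,c,\B)$ whose clique modulator has size polynomial in $q$. The natural move is to take $G$ to be (essentially) the complement of $H$ restricted appropriately: a clique of size $k$ in $H$ corresponds to an independent set of size $k$ in $\overline{H}$, i.e.\ to $k$ vertices that may safely share a single color. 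The independent set $I = V(H)\setminus Q$ of $H$ becomes a clique in $\overline{H}$, so $\overline{H}$ is ``$q$ vertices away from a clique'' — exactly the parameter we need — because deleting $Q$ from $\overline{H}$ leaves the clique on $I$.

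Concretely I would set $G = \overline{H}$ (possibly padded with a few auxiliary vertices, as in the split-graph reduction of Theorem~\ref{SplitHard}), let the modulator be $Q$ so that the distance to clique is at most $q$, and design the budget list $\B$ so that a proper budgeted coloring is forced to use exactly one color on $k$ mutually non-adjacent vertices of $G$ (hence a $k$-clique of $H$) while all remaining vertices get private colors of budget $1$. The count works out much as in the co-cluster and split reductions: one color with budget $k$ that must be completely saturated, and $|V(G)|-k$ colors of budget $1$, with the total budget equal to $|V(G)|$ so no slack is available. The forward direction is immediate — a $k$-clique of $H$ gives the required monochromatic independent set — and the converse argues that any budgeted coloring of $G$ must place the budget-$k$ color on an independent set of size exactly $k$, which is a clique in $H$. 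Care is needed so that the clique $G[I]$ does not by itself contain a large independent set: since $I$ induces a clique in $G$, every color class meets $I$ in at most one vertex, so a monochromatic set of size $k$ must draw at least $k-1$ vertices from $Q$ unless $k$ is small; one fixes this by ensuring the budget-$k$ color can only be realized on non-edges of $H$ among $V(H)$, i.e.\ the reduction must pin down where the saturating color lives. This pinning is where auxiliary gadget vertices (an apex-style vertex adjacent to all of $I$, forcing the special color off $I$) will be used.

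The main obstacle I anticipate is precisely this last point: making the budget structure \emph{force} the large monochromatic class to correspond to a clique of $H$ in $V(H)$, rather than being absorbed partly into the modulator or into the clique side $I$, while keeping the modulator size $O(\mathrm{poly}(q))$ — in particular $O(q)$ rather than $O(q + k)$, since $k$ may be as large as $|V(H)|$. I expect to resolve it by adding a constant (or $O(q)$) number of apex vertices that are complete to $I$ in $G$, which both forbids the special color from being used on $I$ and keeps the deletion-to-clique distance at $q + O(1)$; the budgets of these apex vertices are then set to $1$ so they are harmless otherwise. Once the gadget is in place, verifying the equivalence and the polynomial bounds on $c$, $\B$, and the modulator is routine, and invoking the cross-composition-based lower bound for \textsc{Clique}/vertex-cover completes the argument.
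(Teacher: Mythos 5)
Your core construction is exactly the paper's reduction: from a \textsc{Clique} instance $(H,k)$ with vertex cover $Q$ of size $q$, take $G=\overline{H}$, give one color budget $k$ and the remaining $|V(H)|-k$ colors budget $1$, so that the budgets sum to exactly $|V(H)|$; since every vertex must be colored and there is no slack, the budget-$k$ color is fully used, its class is an independent set of size $k$ in $\overline{H}$, i.e.\ a $k$-clique of $H$, and $Q$ is a clique modulator of $\overline{H}$ of size $q$. Up to that point your argument is complete, sound, and identical to the paper's.

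The ``main obstacle'' you then raise, however, is not an obstacle, and the fix you sketch would actually damage the reduction. Nothing needs to be pinned down: in any proper coloring of $G=\overline{H}$, \emph{every} color class is an independent set of $G$ and hence a clique of $H$, regardless of how it splits between $Q$ and $I=V(H)\setminus Q$. That a monochromatic set of size $k$ meets $I$ in at most one vertex is exactly what one expects, because any clique of $H$ contains at most one vertex outside the vertex cover $Q$; this is harmless for both directions of the equivalence, and the vertex cover plays no role in correctness, only in bounding the modulator. Worse, an apex vertex complete to $I$ that forbids the budget-$k$ color on $I$ would break the forward direction: $H$ may have only $k$-cliques that use one vertex of $I$ (e.g.\ $H$ a $k$-clique plus isolated vertices, with $Q$ consisting of $k-1$ clique vertices), so your gadget would map a YES-instance to a NO-instance. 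Drop the gadget entirely; the plain construction, the saturation observation, and the fact that $Q$ is a clique modulator in $\overline{H}$ already give the polynomial-parameter transformation from \textsc{Clique} parameterized by vertex cover, and hence the claimed kernel lower bound for \bcp parameterized by distance to clique.
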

\begin{proof}
	
	Consider any instance of clique problem parameterized by vertex cover, $\I=(G,X,\ell)$. Here $G$ is a graph, $X\subseteq V(G)$ is a vertex cover and we would like to find out whether there exists a clique of size $\ell$ in $G$. Parameter is $k=|X|$. We construct the following instance $\I'=(G^c, n-\ell+1, \B)$ of \bcp as follows.  Here $G^c$ is the complement of $G$ and we set the budgets as follows. First $n-\ell$ colors have budget one and the last color has budget $\ell$. 
	
	Next, we prove that $G$ has a clique of size $\ell$ if and only if $\I'$  admits a proper budgeted coloring.
	Assume that $G$ has a clique of size $\ell$. It is an independent set in $G^c$, and uses the color with budget $\ell$ to color the independent set. There are $n-\ell$ colors with budget one and $n-\ell$ vertices left to color. Color them with all different colors. To prove the converse, assume $\I'$ is a YES instance. Observe that sum of the budgets of the colors is exactly $n$. So, the color with budget $\ell$ has been entirely used, which forms an independent set of size $\ell$ in $G^c$ and hence a clique in $G$.
	
	Observe that as $X$ is a vertex cover in $G$, $V(G)\setminus X$ is an independent set in $G$. Thus $V(G)\setminus X$ is a clique in $G^c$  and $X$ is a clique modulator in $G^c$. This completes the proof.
\end{proof}

\subsection{Budgeted Graph Coloring Parameterized by the Vertex Cover Size}\label{sec:param_VC}
In this section, we study the \bcp on the class of graphs that are $k$ vertices away from an independent set, i.e. on graphs that have a $k$-sized vertex cover. Observe that if a graph has a $k$-sized vertex cover, then it also has a $k$-sized CVD because a vertex cover is also a CVD. Thus the FPT results of section~\ref{sec:param_CVD} follow for vertex cover size as well. 
However, in this section, we give a stronger result by showing that \bcp is \fpt parameterized just by vertex cover, independent of any other parameter.

Let, $(G,c,\B)$ be an instance of the \bcp where the graph $G$ has a vertex cover $S$ of size $k$. So, $I= V(G) \setminus S$ is an independent set. 
Note that we do not need to assume that we are given $S$, as we can find $S$ in $\mathcal{O}^*(1.2738^k)$ time~\cite{param_VC}.

Our algorithm differs from the algorithm of Section~\ref{sec:param_CVD} (Proof of Theorem~\ref{th:cvd+color}) in only the first step of coloring the vertices of $S$. Instead of trying all possible colorings to color $S$ in the first step, we apply a greedy method. 

Let $\mathcal{P}=\{P_1, P_2, P_3, \cdots, P_\ell\}$ be the partition of $S$,  where $\ell \leq k$, and each part $P_i$ is independent. For each part $P_i$, we define the set of feasible colors for $P_i$ by $F_i$ where $F_i=\{j\in[c]|b_j\geq |P_i|\}$. Let $L_i$ be the set of least budgeted $\ell$ colors in $F_i$; if $|F_i|\leq \ell$, we set $L_i=F_i$.

Next we show that if there exist a proper budgeted coloring for $G$ then there exist a proper budgeted coloring where each $P_i$ gets one of the colors from $L_i$. We denote a coloring $\varphi$, a {\sc restricted proper budgeted coloring} with respect to $\Pe$ if for all $P_i\in\Pe$ any two vertices $u,v\in P_i$, $\varphi(u)=\varphi(v)$.

\begin{lemma}\label{lemma:color_VC}
	If there exists a restricted proper budgeted coloring of $G$ with respect to $\Pe$, then there exists a restricted proper budgeted coloring where the vertices in each partition $P_i$ is colored with one of the colors in $L_i$.  
\end{lemma}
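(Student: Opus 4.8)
The plan is to use an exchange argument, very much in the spirit of Lemma~\ref{lemma:cluster_poly} and the exchange step in the proof of Theorem~\ref{th:CVD+cluster}. Start with an arbitrary restricted proper budgeted coloring $\varphi$ of $G$ with respect to $\Pe$. Each part $P_i$ is assigned a single color $\varphi(P_i)$, and since $\varphi$ is proper and budget-respecting we have $b_{\varphi(P_i)} \ge |P_i|$, i.e. $\varphi(P_i) \in F_i$. We want to massage $\varphi$ so that $\varphi(P_i) \in L_i$ for every $i$, without destroying properness or the budget constraints, and without changing which color classes the independent-set vertices of $I$ sit in more than necessary.

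The key observation is that the parts $P_1,\dots,P_\ell$ use at most $\ell$ colors among themselves, so if $\varphi(P_i) = a \notin L_i$, then $a \in F_i \setminus L_i$, meaning there are at least $\ell$ colors in $F_i$ with budget no larger than $b_a$. Since the other parts occupy at most $\ell-1$ of these, there is some color $a' \in L_i$ with $b_{a'} \le b_a$ that is not used by any part of $\Pe$. The move is then to recolor $P_i$ with $a'$: we must also fix up the vertices of $I$ currently colored $a'$, of which there are at most $b_{a'} \le b_a$; recolor all of them with $a$ (this is legal because $I$ is independent, so no conflicts arise among them, and they were non-adjacent to whatever had color $a$ before only insofar as $a$ was used by $P_i$ — but we are simultaneously freeing $a$ from $P_i$). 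One checks properness: $P_i$ is independent; its neighbours in $S$ are in other parts, none of which uses $a'$; its neighbours in $I$ cannot be colored $a'$ after the swap because we just moved all of $I$'s $a'$-vertices to $a$, and $P_i$ no longer uses $a$. The budget for $a'$ is now $|P_i| \le b_{a'}$ on the $S$-side plus $0$ from $I$, fine; the budget for $a$ now carries at most $b_{a'} \le b_a$ vertices from $I$ plus whatever other parts use $a$ — here one must be slightly careful, but since $a$ was previously used only by $P_i$ among the parts (distinct parts get distinct colors in a restricted coloring), after the swap $a$ is used by no part and by at most $b_{a'}\le b_a$ vertices of $I$, so $|V_a| \le b_a$.

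To make this terminate, I would order the parts by size (recall $|P_1| \ge \cdots \ge |P_\ell|$) and define a potential such as the number of parts $P_i$ with $\varphi(P_i) \in L_i$, or more carefully the lexicographic profile of the multiset $\{b_{\varphi(P_i)}\}$, and argue each exchange strictly improves it while never undoing a part already sitting in its $L_i$. The main obstacle — and the part that needs the most care in the writeup — is precisely ensuring the exchange does not knock some earlier part $P_j$ (already placed in $L_j$) out of $L_j$: this is why we only ever recolor $P_i$ to a color $a'$ not used by any part at all, and recolor $I$-vertices (never $S$-vertices) to absorb the clash, so the color assignments of all other parts are untouched. Iterating over $i = 1,\dots,\ell$ and repeating, we reach a restricted proper budgeted coloring with $\varphi(P_i) \in L_i$ for all $i$, as claimed.
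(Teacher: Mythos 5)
Your setup (the existence of an unused color $a'\in L_i$ with $b_{a'}\le b_a$, and the potential/maximality argument over parts) matches the paper, but the exchange step itself has a genuine gap: your recoloring can violate the budget of color $a$. After your move, the color class of $a$ consists of the $I$-vertices that \emph{already} had color $a$ (you never touch them) \emph{plus} all the $I$-vertices you just moved from $a'$; your accounting ("after the swap $a$ is used by no part and by at most $b_{a'}\le b_a$ vertices of $I$") forgets the first group. Concretely, take $|P_i|=1$, $b_a=10$, $b_{a'}=5$, with color $a$ used on $P_i$ and on $9$ vertices of $I$, and color $a'$ used on $5$ vertices of $I$. Then $a'\in F_i$, $b_{a'}\le b_a$, and after your exchange color $a$ is carried by $9+5=14$ vertices against a budget of $10$. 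Note also that the obvious repair of swapping the two classes wholesale on $I$ (sending the old $a$-vertices of $I$ to $a'$) fails in the other direction, since $P_i$ together with up to $b_a-|P_i|$ old $a$-vertices need not fit into the smaller budget $b_{a'}$.

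This is exactly the point where the paper's proof does more work: writing $q=a'$, $r=a$, $|P_i|=a$, $b_q=a+d$, $b_r=a+e$ with $d\le e$, it colors $P_i$ with $q$, moves $V_q$ (the $I$-vertices of color $q$) to $r$, and then \emph{redistributes} $V_r$ (the $I$-vertices of color $r$) over the leftover budgets of \emph{both} colors, which is legitimate because $V_r$ has no edges to $P_i$ (they shared color $r$) and $I$ is independent; the inequality $|V_r|\le e\le (b_q-a)+(b_r-|V_q|)$ shows the combined leftover budget suffices. So your argument needs this three-way redistribution (or an equivalent device) rather than the single "dump $V_{a'}\cap I$ onto $a$" step; as written, the budget constraint for $a$ simply does not follow.
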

\begin{figure}[!htb]
	\centering
	\includegraphics[width= 8 cm]{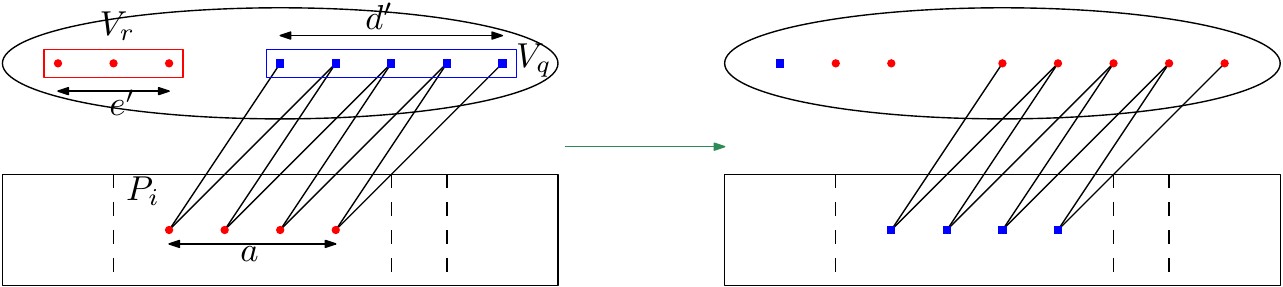}
	\caption{Illustration of Lemma \ref{lemma:color_VC}, color $q$ and $r$ are denoted by {\color{blue}{\rule{5 pt}{5 pt}}} and  {\color{red}$\bullet$} respectively}
	\label{fig:vc} 
\end{figure}
\begin{proof}
	For the sake of contradiction assume that there does not exist a restricted proper budgeted coloring such that all the vertices in each partition $P_i$ are colored with one of the colors in $L_i$. Let $\varphi^*$ be the coloring that maximizes the number of partitions $P_i$ that are colored with one of the colors in $L_i$. Observe that if for a partition $|L_i|\leq \ell$ then $L_i=F_i$ thus must be colored with one of the colors in $L_i$ in any restricted budgeted proper coloring.	
	Let $P_i\in \Pe$ be the partition that is colored with a color $r\notin L_i$. As $L_i$ contains at least $\ell$ colors there exist a color say $q\in L_i$ which is not used to color any of the partitions in $\Pe$. Observe that $b_q \leq b_r$ (see Figure \ref{fig:vc}).
	Let $|P_i|=a$, $b_q= a+d$ and $b_r= a+e$ where $0 \leq d\leq e$. Next, we construct a coloring $\varphi'$ from $\varphi^*$ as follows. If $d=e$ then exchange the color $r$ and $q$ to construct $\varphi'$ but this will contradict the maximality of  $\varphi^*$. Thus assume that $d<e$. Let $V_r$ be the set of vertices in $V\setminus S$ which are colored with color $r$. Observe that there are no edge between $P_i$ and $V_r$. Let $|V_r|=e'\leq e$.
	Let $V_q$ be the set of vertices in $V\setminus S$ which are colored with color $q$. Suppose $|V_q|=d'\leq a+d$. 
	
	We construct $\varphi'$ from $\varphi^*$ as follows. We color the vertices of $P_i$ with the color $q$ and the vertices of $V_q$ with $r$ (observe that $b_q<b_r$). We color $V_r$ with rest of the colors of $q$ and $r$. In order to prove that there are sufficient colors left to color $V_r$ let us observe the following. 
	\begin{align*}
		|V_r|&=e'\leq e\leq e+a+d-d' \text{ (as $d'\leq a+d$)}\\
		&= d+(e+a-d')=(b_q-a)+(b_r-d')
	\end{align*}
	We color the rest of the vertices with the same color as of $\varphi^*$. Observe that $\varphi'$ is a restricted proper budgeted coloring and contradicts the maximality of $\varphi^*$ and the claim holds.
\end{proof}

For each part $P_i$ in the partition, we first find a set of the least 
budgeted $\ell$ colors each of which is of size at least $|P_i|$ (if there aren't $\ell$ different colors satisfying the budget constraint, we pick all those that satisfy; if there aren't any color satisfying the budget constraint, we abandon this partition and move on). Then we try all possible colorings of $S$ coloring each $P_i$ with any of the colors we have found for $P_i$ making sure that no pairs of $P_i$'s get the same color. 
As soon as we fix the coloring of each $P_i$, we are left with the independent set vertices. Since each vertex in the independent set is itself a singleton cluster, we have to solve the \ebcp. By Lemma~\ref{lemma:cvd+colorscorrectness}, it is known that the problem can be solved in polynomial time.

As the number of partitions $\mathcal{P}$ of $S$ is $\mathcal{O}(k^k)$, and as the number of colorings tried for each partition is $\mathcal{O}^*(k^k)$, we have the following theorem.
\begin{theorem} \label{TH:VC}
	\bcp parameterized by vertex cover can be solved in $\mathcal{O}^{*}(2^{2k \log k})$ time.
\end{theorem}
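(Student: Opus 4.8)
The plan is to establish \bcp parameterized by vertex cover size $k$ by reducing, after a greedy coloring of the vertex cover, to the \ebcp instances already solved in Lemma~\ref{lemma:cvd+colorscorrectness}. First I would fix a vertex cover $S$ of size $k$ (obtainable in $\mathcal{O}^*(1.2738^k)$ time), so that $I = V(G)\setminus S$ is an independent set, and enumerate all partitions $\Pe = \{P_1,\dots,P_\ell\}$ of $S$ into independent sets; since $|S|=k$ there are at most $k^k = \mathcal{O}^*(2^{k\log k})$ such partitions, and a proper budgeted coloring of $G$ induces such a partition by grouping vertices of $S$ by color. For each fixed $\Pe$, the key structural insight is Lemma~\ref{lemma:color_VC}: rather than trying all $c^k$ colorings of $S$, it suffices to restrict, for each part $P_i$, to the set $L_i$ of the $\ell$ least-budgeted colors whose budget is at least $|P_i|$ (or all feasible colors if fewer than $\ell$ exist). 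This is the crucial economy, because $|L_i|\le \ell \le k$, so the number of colorings of $S$ consistent with $\Pe$ and the $L_i$'s is at most $\ell^\ell = \mathcal{O}^*(2^{k\log k})$.

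Next I would carry out the counting. The algorithm iterates over all partitions $\Pe$ of $S$ (at most $\mathcal{O}^*(2^{k\log k})$ of them), and for each, computes the sets $L_i$ in polynomial time, then iterates over all assignments of pairwise-distinct colors from the $L_i$'s to the parts (at most $\mathcal{O}^*(2^{k\log k})$ of them). Each such combined choice yields a proper coloring $\alpha$ of $S$ together with the induced partition, i.e.\ precisely an \ebcp instance $\I' = (G, c, \B, \Pe, \alpha)$, and since every vertex of the independent set $I$ is a singleton cluster in $G-S$, Lemma~\ref{lemma:cvd+colorscorrectness} solves each such instance in polynomial time via the flow network. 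By Lemma~\ref{lemma:color_VC}, $\I$ is a YES instance iff at least one of these \ebcp instances is a YES instance, so correctness follows. The total running time is $\mathcal{O}^*(2^{k\log k}) \cdot \mathcal{O}^*(2^{k\log k}) \cdot \mathrm{poly}(n) = \mathcal{O}^*(2^{2k\log k})$, which is the claimed bound.

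The step I expect to be the real content — and which is already discharged by Lemma~\ref{lemma:color_VC} above — is justifying that one may restrict attention to the $\ell$ least-budgeted feasible colors per part. The exchange argument there is where the care is needed: given an optimal restricted budgeted coloring $\varphi^*$ maximizing the number of parts colored from their $L_i$, one takes a part $P_i$ colored with some $r\notin L_i$, finds a color $q\in L_i$ unused on any part (which exists since $|L_i|\ge \ell$ and there are only $\ell$ parts), and recolors $P_i$ with $q$, moving the $V_q$ vertices (neighbors-free of $P_i$ in the independent set) to $r$ and redistributing $V_r$ among the leftover capacity of $q$ and $r$; the budget inequality $b_q \le b_r$ together with the arithmetic $|V_r| \le (b_q - |P_i|) + (b_r - |V_q|)$ guarantees enough room. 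Everything else — enumerating partitions, bounding their number, invoking the flow-based subroutine — is routine bookkeeping, so the main obstacle is purely that this exchange argument must be set up so that the recolored $\varphi'$ remains both proper and budget-respecting, which it does because all edges incident to $P_i$ land in $S$ and the colors being juggled touch $I$ only on independent vertices.
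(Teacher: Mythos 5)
Your proposal is correct and follows essentially the same route as the paper: enumerate the at most $k^k$ partitions of the vertex cover into independent parts, use Lemma~\ref{lemma:color_VC} to restrict each part to its $\ell$ least-budgeted feasible colors so only $\mathcal{O}^*(k^k)$ colorings of $S$ need be tried, and solve each resulting \ebcp instance in polynomial time via the flow network of Lemma~\ref{lemma:cvd+colorscorrectness}, giving the $\mathcal{O}^*(2^{2k\log k})$ bound. The exchange argument you highlight is exactly the paper's proof of Lemma~\ref{lemma:color_VC}, so nothing is missing.
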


It is known that \ecp does not admit a polynomial kernel when parameterized by vertex cover and the number of colors unless \textsf{NP $\subseteq$ coNP/poly}~\cite{gomes2019structural}. Hence the same result is true for \ecp
when parameterized by vertex cover alone. As \bcp is a generalization of \ecp we have the following corollary.
\begin{corollary}
	\bcp does not admit a polynomial kernel when parameterized by vertex cover unless \textsf{NP $\subseteq$ coNP/poly}.
\end{corollary}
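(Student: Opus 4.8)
The plan is to combine two ingredients: the known absence of a polynomial kernel for \ecp under the combined parameter ``vertex cover size plus number of colors'', and a trivial parameter-preserving embedding of \ecp into \bcp (essentially the same construction that underlies Lemma~\ref{lemma:allsame}).

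First I would deduce that \ecp parameterized by vertex cover size \emph{alone} admits no polynomial kernel unless \textsf{NP $\subseteq$ coNP/poly}. This is immediate from the cited result of~\cite{gomes2019structural}: a polynomial kernel with respect to the vertex cover size $k$ would, in particular, be a polynomial kernel with respect to the (never smaller) combined parameter $k$ plus the number of colors, since any polynomial in $k$ is bounded by a polynomial in $k+c$; this would contradict the known lower bound for \ecp under that combined parameter. Note the direction here: the combined-parameter lower bound is the \emph{stronger} hypothesis to start from, and it implies the single-parameter lower bound, not conversely.

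Second I would exhibit a polynomial parameter transformation from \ecp parameterized by vertex cover to \bcp parameterized by vertex cover. Given an \ecp instance $(G,c)$ with $n=|V(G)|$, output the \bcp instance $(G,c,\B)$ in which exactly $(n \bmod c)$ of the budgets equal $\ceil*{n/c}$ and the remaining $c-(n\bmod c)$ budgets equal $\floor*{n/c}$. Then $\sum_{i=1}^{c} b_i = n$, so in any proper budgeted coloring $\varphi$ of $G$ we must have $|V_i|=b_i$ for every $i$ (no color can be underused without forcing another color to exceed its budget), i.e.\ $\varphi$ is an equitable coloring; conversely every equitable coloring of $G$ respects $\B$. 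The underlying graph is left untouched, so the vertex cover size---and hence the parameter---is preserved exactly, and the transformation clearly runs in polynomial time.

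Finally I would invoke the standard fact that polynomial parameter transformations transport kernel (indeed compression) lower bounds: since \bcp lies in \textsf{NP} and \ecp is \textsf{NP}-hard, a polynomial kernel for \bcp parameterized by vertex cover, composed with the transformation above, would yield a polynomial compression for \ecp parameterized by vertex cover, contradicting the first step. This establishes the corollary. I do not expect a genuine obstacle; the only points that need care are verifying that $\sum_i b_i = n$ really forces equality $|V_i| = b_i$ (including degenerate cases such as $c \ge n$, where the budgets are $0$ or $1$), and stating precisely that the combined-parameter lower bound implies the single-parameter one.
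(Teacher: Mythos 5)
Your argument is correct and follows essentially the same route as the paper: the paper likewise derives the vertex-cover-alone lower bound for \ecp from the known lower bound for the combined parameter (vertex cover plus number of colors), and then transfers it to \bcp via the fact that \bcp generalizes \ecp through the budget assignment $\floor*{n/c}/\ceil*{n/c}$ on the unchanged graph (the construction behind Lemma~\ref{lemma:allsame}), which is exactly your polynomial parameter transformation. Your write-up merely makes explicit the standard kernel-lower-bound transfer that the paper leaves implicit.
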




\section{Exact Algorithm for Budgeted Coloring}\label{sec:exact}
In this section, we present an exact exponential time algorithm for the \bcp based on dynamic programming. Let $G[X]$ denote the subgraph induced by $X$, for $X \subseteq V(G)$. 
A $\mathcal{O}^*(3^n 2^c)$ algorithm is easy by computing a table $T$ of size $2^{n} \times 2^{c}$ whose entries are as follows. For a subset $X$ of vertices, and a subset $C$ of colors, the entry $T[X,C]$ will contain $1$, if $G[X]$ can be colored with the colors in $C$ along with their respective budgets and $0$ otherwise. Now the recurrence relation for this problem is as follows:
$T[S,C]=\underset{I \in G[S], c \in C}{\bigvee}T[S \setminus I,C \setminus \{c\}]$
where $I$ is an independent set in $G[S]$ of size at most $b_c$. 
Trivially, when both $S$ and $C$ is $\emptyset$, $T[S,C]=0$. Our algorithm runs on all possible subsets $S$ of $V(G)$, and for every subset $S$, it runs on all its independent sets, and all possible colors, which is bounded by $c 2^{|S|}$. For every $S$, we have table entries for all possible subset $C$ of colors which is bounded by $2^{c}$. 
Hence, the total running time of this algorithm can be easily seen to be $ \sum_{i=1}^{n} \binom{n}{i} 2^{i} \cdot 2^{c}\cdot c$ which is $=\mathcal{O}(3^{n}\cdot 2^{c}\cdot c)$.

In what follows we improve the runtime to $\mathcal{O}^*(2^n)$ using the principle of inclusion-exclusion, essentially generalizing the known algorithm for the proper $c$-coloring problem~\cite{FominKratschbook}
to show the following.
\begin{theorem}\label{exact}
	Budgeted $c$-Coloring problem can be solved in $\mathcal{O}^*(c 2^n)$ time.
\end{theorem}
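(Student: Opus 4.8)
The plan is to generalize the inclusion--exclusion coloring algorithm of Björklund--Husfeldt (as presented in~\cite{FominKratschbook}) to the budgeted setting. Recall that for ordinary $c$-colorability, one counts the number of ways to cover $V(G)$ by $c$ independent sets (allowing overlaps), which is nonzero iff $G$ is $c$-colorable; this count equals $\sum_{X \subseteq V(G)} (-1)^{|V(G)\setminus X|} \big(s(X)\big)^c$, where $s(X)$ is the number of independent sets of $G[X]$, and $s(X)$ for all $X$ can be computed in $\mathcal{O}^*(2^n)$ time by dynamic programming. The obstacle here is twofold: the $c$ colors are not interchangeable (color $i$ may cover at most $b_i$ vertices), and we need a \emph{proper} coloring, i.e.\ a partition into independent sets, not merely a cover. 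I would address both by working with a slightly different quantity.

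\medskip
First I would set up the counting object. For a color $i$ and a subset $X \subseteq V(G)$, let $s_i(X)$ denote the number of independent sets of $G[X]$ of size \emph{at most} $b_i$. Define, for each $X \subseteq V(G)$,
\[
 f(X) \;=\; \sum_{i=1}^{c} s_i(X) \quad\text{(a single ``palette'' of small independent sets)} ,
\]
and consider the ordered covers: tuples $(I_1,\dots,I_c)$ where $I_j$ is an independent set of $G$ with $|I_j| \le b_j$ and $\bigcup_j I_j = V(G)$. By standard inclusion--exclusion over the ground set,
\[
 N \;=\; \sum_{X \subseteq V(G)} (-1)^{|V(G)| - |X|} \prod_{j=1}^{c} s_j(X)
\]
counts exactly these ordered covers. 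The key observations are: (a) $N > 0$ iff such a cover exists; (b) a cover by sets of sizes $\le b_j$ can be turned into a \emph{partition} (proper budgeted coloring) by removing, for each vertex, all but one of the classes containing it --- this only shrinks the classes, so the budget constraints are preserved. Hence $N>0$ iff $(G,c,\B)$ is a YES-instance.

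\medskip
Next I would argue the running time. The quantities $s_j(X)$ for all $X$ and all $j \in [c]$ can be computed by a single pass of subset dynamic programming: fix an arbitrary vertex $v \in X$; then an independent set of $G[X]$ either avoids $v$ (an independent set of $G[X \setminus v]$) or contains $v$ (an independent set of $G[X \setminus N[v]]$ together with $v$), so one maintains, for each $X$, the vector of counts of independent sets of $G[X]$ of each size $0,1,\dots,n$, and reads off $s_j(X) = \sum_{t \le b_j} (\#\text{ind.\ sets of size }t)$. This costs $\mathcal{O}^*(2^n)$ time and space (the per-subset work is polynomial). Computing $N$ then costs $\mathcal{O}^*(c\,2^n)$: one term per subset $X$, each term a product of $c$ numbers of polynomially many bits. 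The sizes of the intermediate integers are bounded by $c \cdot 2^n$ in magnitude, i.e.\ polynomially many bits, so arithmetic is within the $\mathcal{O}^*$ slack.

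\medskip
The main obstacle I anticipate is the ``cover versus partition'' gap: one must be careful that relaxing to ordered covers of sets of size at most $b_j$ (rather than exactly-partitioning) does not change the answer. The direction ``partition $\Rightarrow$ cover'' is immediate; for the converse one uses the shrinking argument above --- given an ordered cover $(I_1,\dots,I_c)$, process vertices one at a time and delete each vertex from all but the lowest-indexed class containing it; the result is an ordered partition into independent sets with $|I_j'| \le |I_j| \le b_j$, which is precisely a proper budgeted coloring. A second, more routine point to check is that using ``size at most $b_i$'' rather than ``size exactly'' in the definition of $s_i$ is exactly what makes the inclusion--exclusion count the right family of ordered covers; this is just the same computation as in the unbudgeted case with the independent-set generating polynomial truncated at degree $b_i$ for color $i$.
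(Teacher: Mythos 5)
Your proposal is correct and follows essentially the same route as the paper: both count, by inclusion--exclusion over subsets of the vertex set, the ordered covers of $V(G)$ by $c$ independent sets with $|I_j|\le b_j$ (your $s_j(X)$ is exactly the paper's $f(W,b_j)$ with $W=V(G)\setminus X$), precompute these counts for all subsets in $\mathcal{O}^*(2^n)$ time by subset dynamic programming, and settle the cover-versus-partition issue by the shrinking argument, which the paper leaves implicit. The only slip is cosmetic: the intermediate products are bounded by $2^{nc}$ rather than $c\cdot 2^n$ in magnitude, but they still have polynomially many bits, so the $\mathcal{O}^*(c\,2^n)$ bound stands.
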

\begin{proof}
	We give an algorithm with the claimed bound for the more general {\sc Budgeted Set Cover} problem defined as follows.
	
	\noindent
	{\bf Input:} A ground set $U$ of size $n$, and a family $\mathcal {F}$ of $m$ subsets of $U$, and a set $b_1, b_2, \ldots ,b_c$ of integers for $c \geq 1$. \\
	{\bf Question:} Are there subsets $F_1, F_2, \ldots ,F_c$ in $\mathcal {F}$ such that $|F_i| \leq b_i$ and $\bigcup _{i=1}^c F_i = U$?
	
	In this problem, we can even assume that the family $\mathcal {F}$ is given implicitly in the sense, given a subset $S$ of $U$, one can test in time polynomial in $|S|$ whether or not $S$ is in $\mathcal {F}$. We call a subfamily that witnesses a solution to our problem as a budgeted $c$-cover of $\mathcal {F}$.
	
	Now by treating the family $\mathcal {F}$ as the family of independent sets of the input graph of the budgeted $c$-coloring, the claim in the theorem follows.
	To design an algorithm for {\sc Budgeted Set Cover} using inclusion-exclusion, we define our object of interest. The object of interest is simply a subfamily 
	$\mathcal{F'} = \{F_1, F_2, \ldots ,F_c \}$ of $\mathcal{F}$ such that $|F_i| \leq b_i$; observe that in our definition for $i\neq j$ $F_i$ and $F_j$ can be the same. 
	The subfamily $\mathcal{F'}$ satisfies property $P(u)$ if $u \in \bigcup _{F \in \mathcal{F'}} F.$
	
	Now it follows that the number of budgeted $c$-covers of $\mathcal{F}$ is simply the number of objects that satisfy $P(u)$ for every $u \in U$ (as our objects are ordered, the same budgeted $c$-cover will be counted a fixed number ($c!)$ times, and hence the final number has to be divided by $c!$ which we ignore hereafter).
	
	For a subset $W \subseteq U$, let $F(W, b_i)$ be the family of sets in $\mathcal{F}$ that avoids $W$ (i.e doesn't have any element of $W$) and are of size at most $b_i$, and let $f(W, b_i)$ be the number of such sets.
	Then the number of objects that do not satisfy property $P(u)$ for any element $u$ of $W$ is simply $\Pi _{i=1}^c f(W, b_i)$. This is simply because there are $f(W,b_i)$ choices for the $i$-th set of our object.
	
	Hence, by the principle of inclusion-exclusion (see Theorem $4.7$ of \cite{FominKratschbook}), we have that the number of budgeted $c$-covers of $\mathcal{F}$ is given by
	\begin{equation}\label{eq:exact}
	   \sum_{W \subseteq U} (-1)^{|W|} \Pi _ {i=1}^c f(W,b_i) 
	\end{equation}
	
	For a fixed $W$ and $b_i$, $f(W,b_i)$ is the number of subsets of $U\setminus W$ of size at most $b_i$ that are in $\mathcal{F}$ and hence can be computed in ${n-|W| \choose b_i} {|b_i|}^{\mathcal{O}(1)}$ which can result in an $\mathcal{O}^*(3^n)$ time for all $W$.
	In the following claim, we show that we can pre-compute and store $f(W,b_i)$ for each $W$ and $b_i$ using dynamic programming and compute them all in $\mathcal{O}^*(2^n)$ time. 
	\begin{claim}\label{claim:exact} 
	The quantity $f(W,b_i)$, for all $W$ and $b_i, i=1$ to $c$ can be computed in $\mathcal{O}^*(2^n)$ time.
	\end{claim}
	\begin{proof}
	Let $u_1, u_2, \ldots u_n$ be the set of elements in $U$. Let us define $g(W, b_i, j)$ be the number of sets of size at most $b_i$ in $\mathcal{F}$ that avoid $W$, but (otherwise) contain all elements of
	$u_{j+1}, u_{j+2}, \ldots u_n$. Then $g(W, b_i, 0) = 1$ if $U \setminus W$ is in $\mathcal{F}$ and is of size at most $b_i$ and $0$ otherwise.
	
	Also $g(W, b_i, n) = f(W,b_i) $ is the number of sets of $\mathcal{F}$ of size at most $b_i$ that avoid $W$, that we want to compute.
	
	And for a $W$ and $b_i$, we can compute $g(W, b_i, j)$ once we have $g(W, b_i, j-1)$ for all $W$ and $b_i$ as follows.
	
	\noindent
	$ g(W, b_i, j) = g(W, b_i, j-1)$ if $u_j \in W$, and \\
	$ g(W, b_i, j) = g(W, b_i, j-1) + g(W \cup \{u_j\}, b_i, j-1)$
	
	It follows that $g(W, b_i, n)$ for all $W$ and $i$ can be computed in $\mathcal{O}(2^n c n)$ time.
	This completes the proof of the claim.
\end{proof}

	The theorem follows from the Equation~\ref{eq:exact} and the Claim.
\end{proof}

\section{Conclusions and Open Problems}\label{sec:conclusions}
In this paper, we have introduced the \bcp and obtained hardness results, polynomial time algorithms, \textsf{FPT} algorithms and kernelization results. 
There are a number of open problems.

\begin{itemize}
\item What is the complexity of the \bcp on trees when $c \geq 3$? The related \ecp and \textsc{Bounded Coloring Problems} are polynomial time solvable on trees~\cite{chen1994equitable,baker1996mutual}. 
On the other hand there is a generalization of \bcp to the list coloring vairant called  \textsc{Number List Coloring Problem} that is known to be $\textsf{W}[1]$-hard on forests parameterized by the total number of colors that appeares on the lists\cite{fellows2011complexity}.

\item In Section~\ref{sec:param_CVD}, we showed that \bcp when parameterized by the distance to a cluster graph and the number of colors or the number of clusters, is $\textsf{FPT}$. The parameterized complexity of the problem parameterized just by the distance to a cluster graph is an interesting open problem.
\end{itemize}

\bibliographystyle{plain}
\bibliography{ReferenceBudgetedColoring}
\end{document}